\newtheorem{teo}{Theorem}[section]
\newtheorem{lem}{Lemma}[section]
\newtheorem{prop}{Proposition}[section]
\newtheorem{remark}{Remark}[section]
\newenvironment{proof}{{\bf Proof.}}{\hfill $\Box $  \newline}
\title{Dirac cones for bi- and trilayer Bernal-stacked graphene in a quantum graph model}
\author{\\ C\'esar R. de Oliveira and Vin{\'{\i}}cius L. Rocha\\ \\ \small{Departamento de Matem\'atica, UFSCar, S\~ao Carlos, SP, 13560-970 Brazil}}
\date{}
\begin{document}

\maketitle

\begin{abstract}  
A quantum graph model for a single sheet of graphene is extended to bilayer and trilayer Bernal-stacked graphene; the spectra are characterized and the dispersion relations explicitly obtained; Dirac cones are then proven to be present only for trilayer graphene, although the bilayer has a gapless parabolic band component. Our model rigorously exhibits  basic facts from tight-binding calculations,  effective two-dimensional models and a $\pi$-orbital continuum model with nearest-neighbour tunneling that have been discussed in the physics literature.
\end{abstract}

\

\

\noindent\textbf{Keywords}: Bernal-stacked graphene; spectral analysis; Dirac cones; quantum graph model.

\

\noindent MSC:  81Q10 (34L40 47E05 81U30)

\

\section{Introduction}

Graphene is a single sheet of graphite and, as a ``true'' two-dimensional material with a peculiar (carbon honeycomb) periodic configuration, has outstanding physical properties~\cite{katsnelson,dasSarmaEtAl,castroEtAl}. Perhaps the most important of such properties is the presence of {\em Dirac cones} located at a finite number of points in the Brillouin zone, the so-called {\em D-points} (or Dirac points). It is accepted that the  motion of an electron spectrally near each D-point is approximately described by a two-dimensional Dirac operator with effective zero mass and effective speed of light  $c/300$;  such exotic framework is responsible for many particular physical properties. The interest in graphene has substantially increased after it was experimentally isolated in 2004.

There are many relevant works in the physics literature describing spectral properties of graphene (even before its controlled isolation), and they are based on approximations as the tight-binding one, density functional calculations and numerical simulations (in particular {\em ab initio} calculations). The first tight-binding study goes back to 1947 and was done by Wallace~\cite{wallace} (see also~\cite{ACM1, coulson1}), who has found that graphene is a zero-gap semiconductor (sometimes characterized as a semimetal) with a linear dispersion relation (Dirac cone),  which defines the D-points.   Let $\theta=(\theta_1,\theta_2)$ denote the quasimomentum in the first Brillouin zone $\mathcal{B}:=[-\pi,\pi]^{2}$,  and $\lambda(\theta)$ the associated dispersion relation; roughly,  $\theta_K\in \mathcal B$ is a D-point candidate if there is a constant $\gamma\ne0$ so that  
\[
\lambda(\theta)-\lambda(\theta_K)\approx\pm \gamma|\theta-\theta_K|,
\]
and  we have a Dirac cone, since the valence (the~``-'' sign above) and the conducting (the~``+'' sign) bands touch (approximately) linearly. We present a precise definition in Section~\ref{sectionDiracCones}. 

See~\cite{PGLMP,jacqmin,ozawaetall2019,DR2019} for descriptions of interesting experiments with {\em artificial graphene,} that is, some synthetic structures  that permit the study of photonic crystals with Dirac cone dispersion and topologically protected edge states; the idea is to built systems for which the physics is simpler to explore than graphene itself, and  electrons may be replaced with  photons, plasmons or  microcavity polaritons.

There are few mathematical results related to the presence of Dirac cones in models of graphene. We underline two interesting works with different approaches, one by Fefferman and Weinstein~\cite{FW1} and another by Kuchment and Post~\cite{KP1}. In~\cite{FW1} the authors consider a two-dimensional Schr\"odinger operator with smooth potentials, which are periodic with respect to the honeycomb lattice and satisfy some symmetry conditions, and have proven the presence of Dirac cones as well as stability under suitable perturbations. In~\cite{KP1} the authors consider a (honeycomb) quantum graph model for  graphene and derive its spectral properties, in particular the presence of Dirac cones. An approach through quantum graphs seems to have been first proposed by Linus Pauling~\cite{pauling} to describe some chemical systems and implemented in~\cite{RScherr}. In summary, in this quantum graph model the electron motion is restricted to the edges and vertices of the honeycomb graph;  Sturm-Liouville operators on edges are considered with ``natural'' boundary conditions at the vertices; an advantage of this model is that, by applying the Floquet-Bloch theory~\cite{eastham,Floquettheory3RS}, it was possible to describe details of the spectrum of the graphene Hamiltonian and its dispersion relation can be explicitly computed.

There are also theoretical and experimental interests in systems composed of finite layers of graphene (see, for instance \cite{BYWCAAZABA2, camposEtAl, latilH, LJXABA1, MCAVFAB2, mccannKosh, MMABAandABAB,  partoensPeeters,    paton, RSRFAB1}), also as an approximation for the bulk graphite. There are two important remarks here; first, the strength of the bond between consecutive layers is much weaker than the bonds between neighbour carbon atoms in the same layer; second, there are different possibilities for stacking layers of graphene, and it was experimentally found  that physical properties depend on how layers are stacked. 

The purpose of this work is to investigate spectral properties and the possible presence of Dirac cones for bilayer and trilayer graphene systems, always Bernal-stacked (also called AB-stacked) and modelled by quantum graphs, that is, we extend the mathematical analysis of~\cite{KP1} to two and three  AB-stacked  graphene sheets. In this setting, we need that all edges are of equal lengths, so the weak interaction between layers should be modelled in an alternative way. We propose to model the weak interaction between layers through a ``small''  weight parameter~$t_0$ in the sum of derivatives of wavefunctions at each vertex and also regulating the values of functions at vertices ($t_0$ is the same for all vertices with connections to other layers).   Details of the proposed model appear in Section~\ref{nAstackednlayergraphene}.

We note that the Bernal-stacked form of multilayer graphene is the most stable one; ahead we describe our proposal of how to implement it in this graph model. Another possibility is the so-called AA-stacked, in which all graphene sheets and carbon-carbon connections between consecutive sheets are alined. In another work~\cite{deORochaAA}, we discuss this possibility for many layers. 

 Our main results here may be summarized as follows. 

\begin{enumerate}
\item We have proven that the bilayer graphene model has no Dirac cones, whereas such cones are present in trilayer graphene dispersion relation. However, the bilayer dispersion relation is gapless and with quadratic touch.
\item For both bilayer and trilayer cases, the spectra have eigenvalues of infinite multiplicity (the eigenvalues of the Dirichlet Hamiltonian in a single edge) and an absolutely continuous component built of closed intervals (bands). The singular continuous spectrum is always absent.
\item We compare our findings to some results obtained in the physics literature.
\item Although the  interlayer interaction parameter $t_0$ should be small to model the multilayer graphene, our results here hold for all $t_0>0$. On the other hand, in our study of multilayer AA-stacked graphene~\cite{deORochaAA}, the larger the number of layers the smaller~$t_0$ is required.
\end{enumerate}

In Section~\ref{nAstackednlayergraphene} we introduce the quantum graph geometry and the proposed Schr\"odinger operator for the Bernal-stacked graphene. In Section~\ref{sectionSpectrumGnA}, we perform a spectral analysis of the bilayer and trilayer graphene Schr\"odinger operators. The existence of Dirac cones is discussed in Section~\ref{sectionDiracCones} and some comparisons with the physics literature are done in Section~\ref{sectComparePL}.

\section{Multilayer Bernal-stacked graphene}\label{nAstackednlayergraphene}

 The hexagonal single-layer graphene lattice~$G$ consists of two triangular sublattices, $g_{\mathbf A}$ and $g_{\mathbf B}$, defined by
\begin{equation}
g_{\mathbf A} = \mathbf A + g \quad\text{and}\quad g_{\mathbf B} = \mathbf B + g,
\end{equation}
where $\mathbf A=(0,0)$, $\mathbf B=(1,0)$ and $g$ is the triangular lattice $g:=\mathbb{Z} {\mathbf E_1} \oplus\mathbb{Z} {\mathbf E_2}$, with ${\mathbf E_1}=(0,\sqrt{3})$ and ${\mathbf E_2}=(3/2,\sqrt{3}/2)$ denoting the \textit{lattice vectors}. The elements of $g_{\mathbf A}$ and $g_{\mathbf B}$ are said \textit{type-A} and \textit{type-B vertices}, respectively. The hexagonal 2D lattice  is given by $G=g_{\mathbf A}\cup g_{\mathbf B}$. It is supposed that carbon atoms are located at the vertices of $G$ and the covalent bonds are represented by edges of length~1, as shown in Figure \ref{FigureGrafeno}.

\begin{figure}[h]
	
	\centering 
	\includegraphics[width=7cm]{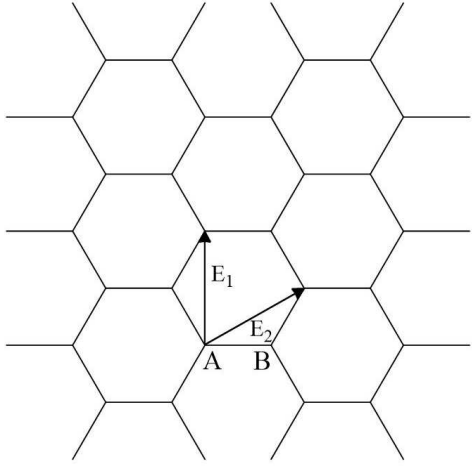}
	\caption{The hexagonal 2D lattice $G$ and its lattice vectors ${\mathbf E_1}$ and ${\mathbf E_2}$.}
	\label{FigureGrafeno}
	
\end{figure} 

Now we introduce the periodic quantum graphs that is proposed to represent the \textit{AB-stacked bilayer/trilayer graphene}; we need a metric graph structure and a suitable Hamiltonian.

The structure of the bilayer graphene, denoted by $\mathcal{G}_{2}$, consists of two sheets of graphene, $G_1$ and $G_2$, each one as defined above, stacked such that a type-A vertex of $G_1$ is located exactly above the corresponding type-A vertex of $G_2$ and alike for type-B vertices; that is, the sheets are alined. The only connections are that  each type-A vertex of~$G_1$ is connected to the nearest three type-B vertices of~$G_2$ (see Figure \ref{GraphBilayer}); such connections are performed through additional edges, and since all edges must have the same length~\cite{KP1} (see, for instance, the representations~\eqref{representationoffunctionsABonW} and~\eqref{representation of functions on W n=3} ahead) the distance between layers is such that the length of such additional edges is~1 (the length of graphene edges). 

The trilayer graphene, denoted by $\mathcal{G}_3$, consists of three graphene sheets, $G_1,G_2$ and $G_3$, in an analogous alined stacking of the bilayer case, but now each type-B vertex of $G_2$ is linked up (only) with the nearest  three type-A vertices of~$G_1$ and the nearest  three type-A vertices of~$G_3$, a total of six additional connections (see Figure \ref{GraphTrilayer}). In both cases, the distance between two consecutive graphene sheets is taken in such way that every edge in $\mathcal{G}_{n}$ have length~1, for~$n=2,3$; ahead we describe how we control the interaction intensity between consecutive graphene sheets.

\begin{figure}
	\centering
	\begin{subfigure}{.5\textwidth}
		\centering
		\includegraphics[width=1.0\linewidth]{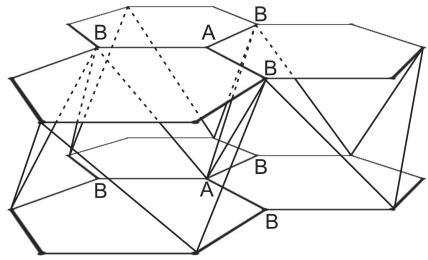}
		\caption{Bilayer graphene.}
		\label{GraphBilayer}
	\end{subfigure}
	\begin{subfigure}{.5\textwidth}
		\centering
		\includegraphics[width=.9\linewidth]{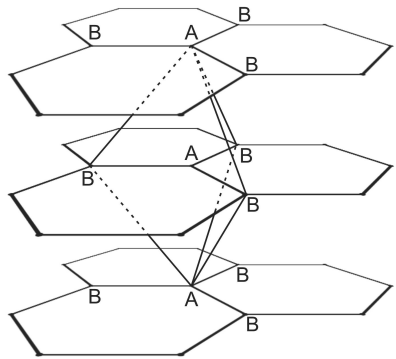}
		\caption{Trilayer graphene.}
		\label{GraphTrilayer}
	\end{subfigure}
	\caption{The lattice structures of the AB-stacked multilayer graphene. Some type-A and type-B points are labelled. For simplicity, in~(b) we present only a few links between sheets.}
	\label{GraphTrilayerBoth}
\end{figure}

Consider the action of the group $\mathbb{Z}^2$ on  $\mathcal{G}_{n}$,
\begin{equation}\label{actionshift}
\mathcal{S}:\mathbb{Z}^2 \times  \mathcal{G}_{n} \rightarrow \mathcal{G}_{n}, \quad \mathcal{S}(\mathbf p,\mathbf x):=p_{1}{\mathbf E_1}+p_{2}{\mathbf E_2}+\mathbf x,
\end{equation}
that is, $\mathcal{S}$ shifts $\mathbf x\in\mathcal{G}_{n}$ by $p_{1}{\mathbf E_1}+p_{2}{\mathbf E_2}$, $\mathbf p=(p_1,p_2)\in\mathbb{Z}^{2}$. As fundamental domain of $\mathcal{S}$, we choose the set $\mathcal{W}_{n}$ as shown in Figure \ref{FundamentalDomainBoth}, which contains two points and three edges of each graphene sheet $G_k$, and the edges that connect consecutive layers (as discussed above). It will be convenient to direct the edges as in  Figure \ref{FundamentalDomainBoth}. 

Our proposal is a balance between physics and explicitly calculations. For instance, in a more realistic two-layer AB geometry, a layer is shifted by one-edge length with respect to the other, and there are B vertices on top of (some) A vertices, but this is harder to implement here. Then we have thought of simulating the true geometry by keeping the layers aligned and connecting lower A vertices to upper B vertices, but in such configuration each lower A vertex has three nearest B neighbours, and connecting each lower A vertex to all its upper B nearest neighbours  includes some symmetry that have allowed us to perform calculations.  Furthermore, since the distance to the closest lying A atoms of both layers is smaller than~1, such coupling may not be implemented. Of course, the obtained results are the final justification of the model.

For $n=2,3$, let $E(\mathcal{G}_{n})$ denote the set of edges of $\mathcal{G}_{n}$, $EL(\mathcal{G}_{n})$ denote the set of edges in the graphene {\bf layers} (e.g., edges in $G_1$ and~$G_2$ in the case of two layers) and $EC(\mathcal{G}_{n})$  the set of   edges {\bf connecting} consecutive sheets of graphene. We will label the edges in ${EL}(\mathcal{G}_{n})$ by the letter~``$\mathbf{a}$" whereas the edges in ${EC}(\mathcal{G}_{n})$ will be labeled by~``$\mathbf{f}$". Note that the sets ${EL}(\mathcal{G}_{n})$ and ${EC}(\mathcal{G}_{n})$ are disjoint and 
\[
E(\mathcal{G}_{n})={EL}(\mathcal{G}_{n})\cup{EC}(\mathcal{G}_{n}).
\] For instance, in Figure \ref{FundamentalDomainTrilayer} we see that $\mathbf a_{11},\mathbf a_{13},\mathbf a_{21}$ are in ${EL}(\mathcal{G}_{3})$ whereas $\mathbf f_1,\mathbf f_2$ are in ${EC}(\mathcal{G}_{3})$. This distinction between different types of edges will be important ahead, in particular in the boundary conditions~\eqref{continuity} and~\eqref{vanishingflux}.

Given a vertex~$\mathbf v$, denote by $E_{\mathbf v}(\mathcal{G}_{n})$  the set of edges of $\mathcal{G}_{n}$ that are connected to~$\mathbf v$, and by ${EL_{\mathbf v}}(\mathcal{G}_{n})$ and ${EC_{\mathbf v}}(\mathcal{G}_{n})$ the elements  of ${EL}(\mathcal{G}_{n})$ and ${EC}(\mathcal{G}_{n})$ connecting to~$\mathbf v$, respectively (note that, for some vertices, ${EC_{\mathbf v}}(\mathcal{G}_{n})$ may be the empty set).

\begin{figure}
	\centering
	\begin{subfigure}{.5\textwidth}
		\centering
		\includegraphics[width=.9\linewidth]{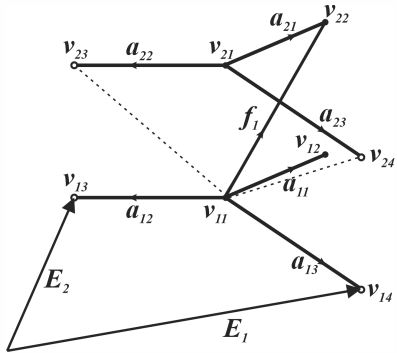}
		\caption{$\mathcal{W}_2$ of the bilayer graphene.}
		\label{FundamentalDomainBilayer}
	\end{subfigure}
	\begin{subfigure}{.5\textwidth}
		\centering
		\includegraphics[width=.9\linewidth]{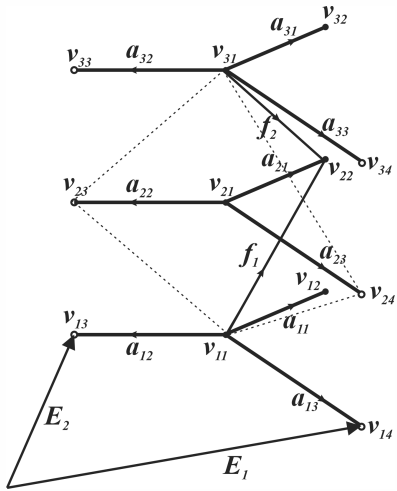}
		\caption{$\mathcal{W}_3$ of the trilayer graphene.}
		\label{FundamentalDomainTrilayer}
	\end{subfigure}
	\caption{The fundamental domains $\mathcal{W}_{2}$ and $\mathcal{W}_{3}$.}
	\label{FundamentalDomainBoth}
\end{figure}

Since $\mathcal{G}_{n}$ is supposed to be embedded into the Euclidean space~$\mathbb R^3$,  we  identify each edge $\mathbf e\in E(\mathcal{G}_{n})$ with the segment $[0,1]$, which identifies the end points of~$\mathbf e$ with $0$ and $1$. One can naturally define the Hilbert space of all square integrable functions on $\mathcal{G}_{n}$,
\begin{equation*}
{\mathrm L}^{2}(\mathcal{G}_{n})=\bigoplus_{\mathbf e\in E(\mathcal{G}_{n})}{\mathrm L}^{2}(\mathbf e). 
\end{equation*} 
We denote functions on $\mathcal{G}_{n}$ by $u=\{u_{\mathbf e}\}_{e\in E(\mathcal{G}_{n})}$, where $u_{\mathbf e}$ is a function defined on the edge~$\mathbf e$.

To see $\mathcal{G}_{n}$ as a periodic quantum graph, it remains to introduce the Schr\"odinger operator acting in ${\mathrm L}^{2}(\mathcal{G}_{n})$. Let $q_0:[0,1]\rightarrow \mathbb{R}$ be a real-valued even continuous function, that is, $q_0(x)=q_0(1-x)$ for all $x\in [0,1]$. As we have identified the edges of $\mathcal{G}_{n}$ with the segment $[0,1]$, we can define a \textit{potential} $q=\{q_{\mathbf e}\}_{\mathbf e\in E(\mathcal{G}_{n})}$ on $\mathcal{G}_{n}$, where $q_{\mathbf e}=q_0$, for all $\mathbf e\in E(\mathcal{G}_{n})$. Note that due to the evenness assumption on $q_0$, the potential $q$ does not depend on the orientations chosen along the edges; an even potential is a consequence of the fact that we have the same kind of atoms at the vertices. We have the following result, which is analogous to Proposition~2.1 in~\cite{DoKuch}.

\begin{lem}\label{lemapotencial}
	The above potential $q$  is invariant with respect to the symmetry group of the $\mathcal{G}_{n}$. 
\end{lem}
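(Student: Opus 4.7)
The plan is to show directly that for any symmetry $\sigma$ of $\mathcal{G}_n$ (viewed as a metric graph embedded in $\mathbb{R}^3$) and any point $\mathbf{x}\in\mathcal{G}_n$, one has $q(\sigma\mathbf{x})=q(\mathbf{x})$. The starting observation is that every symmetry of $\mathcal{G}_n$ must, in particular, be an automorphism of the underlying combinatorial graph: it permutes vertices and sends edges to edges, preserving edge lengths because it is an isometry of $\mathbb{R}^3$ restricted to $\mathcal{G}_n$.

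First I would fix an edge $\mathbf{e}\in E(\mathcal{G}_n)$ and a point $\mathbf{x}\in\mathbf{e}$, and let $\mathbf{e}':=\sigma(\mathbf{e})\in E(\mathcal{G}_n)$. Using the identification of each edge with $[0,1]$, let $t\in[0,1]$ be the parameter of $\mathbf{x}$ on $\mathbf{e}$ and $t'\in[0,1]$ the parameter of $\sigma(\mathbf{x})$ on $\mathbf{e}'$. Since $\sigma$ is an isometry taking $\mathbf{e}$ bijectively onto $\mathbf{e}'$ (both of unit length), either $t'=t$ or $t'=1-t$, depending on whether $\sigma$ preserves or reverses the chosen orientation on the edges involved.

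Now I would invoke the defining property of $q$: on every edge $\mathbf{e}\in E(\mathcal{G}_n)$ one has $q_{\mathbf{e}}=q_0$, the \emph{same} function, independently of the edge. Hence
\[
q(\sigma\mathbf{x}) = q_{\mathbf{e}'}(t') = q_0(t'), \qquad q(\mathbf{x}) = q_{\mathbf{e}}(t) = q_0(t).
\]
If $\sigma$ preserves the orientation on this edge, then $t'=t$ and the two values agree trivially. If $\sigma$ reverses the orientation, then $t'=1-t$, and by the assumed evenness $q_0(1-t)=q_0(t)$, so again $q(\sigma\mathbf{x})=q(\mathbf{x})$. Thus $q\circ\sigma=q$ on each edge, which gives invariance on all of $\mathcal{G}_n$.

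I do not expect any real obstacle here; the content of the lemma is essentially bookkeeping. The only delicate point is ensuring that the possible orientation reversal of edges by symmetries is handled correctly, and this is precisely what the hypothesis $q_0(x)=q_0(1-x)$ is designed to accommodate (as noted already in the paragraph preceding the lemma). The argument is otherwise identical in form to Proposition~2.1 of~\cite{DoKuch}, and applies verbatim to both $\mathcal{G}_2$ and $\mathcal{G}_3$ since it uses only the facts that all edges have unit length and carry the same even potential.
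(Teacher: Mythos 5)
Your argument is correct and is essentially the proof the paper intends: the paper gives no explicit proof but defers to the analogous Proposition~2.1 of~\cite{DoKuch}, whose content is exactly your bookkeeping — every symmetry permutes the unit-length edges, all edges carry the same potential $q_0$, and the evenness $q_0(x)=q_0(1-x)$ absorbs any orientation reversal. Nothing is missing.
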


Finally, we  introduce the  \textit{Bernal-stacked multilayer graphene Schr\"odinger operator} $H_{n}$, $n=2,3$, which acts on a function $u\in D(H_{n})$ as
\begin{equation}\label{defHnA} 
(H_{n}u_{\mathbf e})(x_{\mathbf e}):= (-\Delta+q_{\mathbf e}(x_{\mathbf e}))u_{\mathbf e}(x_{\mathbf e}) = -\frac{{\mathrm d}^{2}u_{\mathbf e}(x_{\mathbf e})}{{\mathrm d}x^2}+q_{\mathbf e}(x_{\mathbf e})u_{\mathbf e}(x_{\mathbf e})\,,
\end{equation}
for each $\mathbf e\in E(\mathcal{G}_{n})$. When the context is  clear, the subscript~``$\mathbf e$" will be omitted. The domain $D(H_{n})$ consists of the functions $u$ on $\mathcal{G}_{n}$ that satisfy the following four set of conditions; $0<t_0\le1$ is a  parameter regulating the interaction between consecutive layers.
\begin{itemize}
	\item[(i)] $u_{\mathbf e}\in H^{2}(\mathbf e)$, for all $\mathbf e\in E(\mathcal{G}_{n})$, where $H^{2}(\mathbf e)$ is the usual \textit{Sobolev Space} in the edge~$\mathbf e$; so the functions $u_{\mathbf e}$ and their derivatives $u_{\mathbf e}'$ are (bounded) continuous on each edge;
	\item[(ii)] $\displaystyle\sum_{\mathbf e\in E(\mathcal{G}_{n})}\|u_{\mathbf e}\|_{H^{2}(\mathbf e)}^{2}<\infty$; 
	\item[(iii)] The  \textit{weighted continuity condition}, which requires that, at each vertex~$v\in V(\mathcal{G}_{n})$, 
		\begin{equation}\label{continuity}
	u_{\mathbf{a}_1}(\mathbf v)=u_{\mathbf{a}_2}(\mathbf v)=\frac{u_{\mathbf{f}_1}(\mathbf v)}{t_0}=\frac{u_{\mathbf{f}_2}(\mathbf v)}{t_0},
	\end{equation}
for all $\mathbf a_1,\mathbf a_2\in {EL_{\mathbf v}}(\mathcal{G}_{n})$ and all $\mathbf f_1,\mathbf f_2\in {EC_{\mathbf v}}(\mathcal{G}_{n})$ (these sets are described above).  
	\item[(iv)] The \textit{weighted Kirchhoff condition}, which requires that, at each  vertex~$\mathbf v\in V(\mathcal{G}_{n})$, 
	\begin{equation}\label{vanishingflux}
		\displaystyle\sum_{\mathbf a\in {EL_{\mathbf v}}(\mathcal{G}_{n})}u_{\mathbf a}^{\prime}(\mathbf v)+\displaystyle\sum_{\mathbf f\in {EC_{\mathbf v}}(\mathcal{G}_{n})}t_0\,u_{\mathbf f}^{\prime}(\mathbf v)=0,
	\end{equation} 
	where $u_{\mathbf e}^{\prime}(\mathbf v)$ is the derivate of $u_{\mathbf e}$ directed from $\mathbf v$ to the other vertex connected to~$\mathbf e$.
\end{itemize}

\begin{remark}
A word on the role of the interaction parameter~$t_0$. The interaction between consecutive layers of graphene is weaker than the interaction between neighbouring carbon atoms in the same layer. If $t_0=1$ we would have no distinction between the interactions; so we have proposed to take $0<t_0<1$ as a way to control the influence of an atom in a different sheet; by~\eqref{vanishingflux}, the smaller~$t_0$ the smaller the influence of such edge to the ``flux balance'' and, by~\eqref{continuity}, the smaller the value of the corresponding function to compensate the $t_0$ in the denominator. However, the general results here (Theorems~\ref{TheoChacSpectrum} and~\ref{mainresult}) hold true for all positive values of the interaction parameter~$t_0>0$, and so it is natural to ask whether such property is shared with other models, as the tight-binding one; we don't know the answer. Note that $t_0=0$ is a singular limit that  is not employed here (although, intuitively,  we get a decoupling among graphene sheets in this limit). We would like to emphasize that we are here proposing the use of~$t_0$ for modelling the weak interactions between graphene sheets because, in the  literature on quantum graphs, it is not clear how to simulate such interactions. 
\end{remark}

	This definition makes $H_{n}$ an unbounded  self-adjoint operator \cite{K1,KS,KBlivro} and, by the evenness condition on the potential and Lemma~\ref{lemapotencial}, it is invariant with respect to all symmetries of the graph $\mathcal{G}_{n}$.

\begin{remark}(Self-Adjointness of $H_n$)
	 By Theorems~1.4.4 and~1.4.11 in~\cite{KBlivro}, in order to prove that the Schr\"odinger operator $H_n$, given by \eqref{defHnA}-\eqref{continuity}-\eqref{vanishingflux}, is self-adjoint, it is necessary and sufficient that there exist two matrices ${\mathbf{A_v}}$ and $\mathbf {B_v}$, for each vertex $\mathbf  v\in V(\mathcal{G}_n)$, both with order $ \mathbf{d_v} \times \mathbf{d_v}$, where $ \mathbf{d_v}$ is the \textit{degree} of~$\mathbf v$ (that is, the number of edges connected to the vertex~$\mathbf v$), such that: 
	\begin{enumerate}
		\item[(j)] the $ \mathbf{d_v} \times 2 \mathbf{d_v}$ matrix $\begin{bmatrix} {\mathbf{A_v}} & {\mathbf{B_v}}	\end{bmatrix}$ has maximal rank;
		\item[(jj)] the matrix ${\mathbf{A_v}}{\mathbf{B_v}}^*$ is self-adjoint, where ${\mathbf{B_v}}^*$ is the adjoint of ${\mathbf{B_v}}$;
		\item[(jjj)] ${\mathbf{A_v}}\mathbf F(\mathbf v)={\mathbf{B_{\mathbf v}}} \mathbf F^\prime(\mathbf{v})$, 
		where the vector $\mathbf{F(\mathbf v)}$ and $\mathbf{F}^\prime(\mathbf v)$ are defined by
		\begin{equation*}\label{Fvector}
		\mathbf F(\mathbf v):=\begin{bmatrix}u_{\mathbf{e}_1}(\mathbf v),\ldots,u_{\mathbf{e}_{d_{\mathbf v}}}(\mathbf v)\end{bmatrix}^\intercal
		\end{equation*}
		and
		\begin{equation*}\label{FPrimevector}
		\mathbf F^\prime(\mathbf v):=\begin{bmatrix}u_{\mathbf{e}_1}^\prime(\mathbf v),\ldots,u_{\mathbf{e}_{d_{\mathbf v}}}^\prime(\mathbf v)\end{bmatrix}^\intercal,
		\end{equation*}		
		with $\mathbf e_1,\ldots,\mathbf e_{d_{\mathbf v}}\in E_{\mathbf v}(\mathcal{G}_n)$.
	\end{enumerate}
	Let $n=2$ and consider the vertex $\mathbf v_{11}\in V(\mathcal{G}_2)$, which has degree~6 (see Figure \ref{FundamentalDomainBilayer}).  Then the $6\times 6$-matrices $\mathbf A_{\mathbf v_{11}}$ and $\mathbf B_{\mathbf v_{11}}$ are given by 
	\begin{equation*}	
	\mathbf A_{\mathbf v_{11}}:=\begin{bmatrix}
	1 & -1 & 0 & 0 & 0 & 0 \\
	0 & 1 & -1 & 0& 0 & 0 \\
	0 & 0 & t_0 & -1 & 0 & 0 \\
	0 & 0 & 0 & t_0 & -t_0 & 0 \\
	0 & 0 & 0 & 0 & t_0 & -t_0 \\
	0 & 0 & 0 & 0 & 0 & 0 
	\end{bmatrix} ,\;\;
	\mathbf B_{\mathbf v_{11}}:=\begin{bmatrix}
	0 & 0 & 0 & 0 & 0 & 0 \\
	0 & 0 & 0 & 0 & 0 & 0 \\
	0 & 0 & 0 & 0 & 0 & 0 \\
	0 & 0 & 0 & 0 & 0 & 0 \\
	0 & 0 & 0 & 0 & 0 & 0 \\
	1 & 1 & 1 & t_0 & t_0 & t_0
	\end{bmatrix}.
	\end{equation*}
	These matrices satisfy the three conditions~(j), (jj) and~(jjj) above for $t_0>0$; e.g., $\mathbf A_{\mathbf v_{11}}\mathbf B_{\mathbf v_{11}}^*=0$. For the other vertices of $\mathcal{G}_{2}$, and also for $\mathcal{G}_3$, the arguments are similar. Therefore, the Schr\"odinger operators  $H_{n}$  are self-adjoint. Also note that the condition~(jjj) is equivalent to the conditions \eqref{continuity} and \eqref{vanishingflux} of the weighted vertex conditions presented before.  
\end{remark}	
	
\begin{remark}
Although some quantities depend on the edge potential~$q_0$, as the discriminant~\eqref{eqDiscrimi}, our general results on spectral type characterization and the possible presence of Dirac cones, including their locations (see ahead), do not depend on any specific choice of the admissible~$q_0$; in particular, it can be the null potential. This shows that we have, in fact, a family of models. 
\end{remark}

\section{Spectral analysis}\label{sectionSpectrumGnA}

In this section we use Floquet-Bloch theory \cite{eastham,Floquettheory2K,Floquettheory3RS,Floquettheory4BES} to study the spectrum of $H_{n}$; we extend results of~\cite{KP1}. We begin with general remarks, then we specialize to bilayer and trilayer graphene. 

 For each \textit{quasimomentum} $\theta=(\theta_1,\theta_2)$ in the Brillouin zone $\mathcal{B}=[-\pi,\pi]^{2}$, let $H_{n}(\theta)$ be the \textit{Bloch Hamiltonian} acting in ${\mathrm L}^2(\mathcal{W}_n)$ as in (\ref{defHnA}), but with a different domain: $D(H_{n}(\theta))$ is the subspace of functions~$u$ that satisfy~(i)-(iv) in $D(H_{n})$, and also the following  \textit{Floquet condition}
\begin{equation}\label{floquetcondition}
u(\mathbf x+p_1 {\mathbf E_1} +p_2 {\mathbf E_2})= e^{i\mathbf p\theta}u(\mathbf x)=e^{i(p_1\theta_1 +p_2\theta_2)}u(\mathbf x),
\end{equation}
for all $\mathbf p=(p_1,p_2)\in \mathbb{Z}^2$ and all $\mathbf x\in \mathcal{G}_{n}$. It is well known that $H_{n}(\theta)$ has purely discrete spectrum~\cite{K1}, denoted by $\sigma(H_{n}(\theta))=\{\lambda_k(\theta)\}_{k\geq 1}$.  The ranges of the functions $\theta \mapsto \{\lambda_k(\theta)\}$ is called the \textit{dispersion relation}  of $H_{n}$ and it determines its spectrum  \cite{eastham,Floquettheory2K,Floquettheory3RS,Floquettheory4BES}
\begin{equation}\label{spectrumHnA}
\sigma(H_{n})=\displaystyle\bigcup_{\theta\in\mathcal{B}}\sigma(H_{n}(\theta)).
\end{equation}

The goal now is to determine the spectra of $\sigma(H_{n}(\theta))$, $\theta\in\mathcal{B}$, by solving the eigenvalue problem 
\begin{equation}\label{EVP1}
H_{n}(\theta)u=\lambda u,\quad \lambda \in \mathbb{R},\quad u \in D(H_{n}(\theta)).
\end{equation}
Consider two auxiliary operators. The first one is the  \textit{Dirichlet Schr\"odinger operator} $H^{D}$  that acts in ${\mathrm L}^{2}([0,1])$ as
\begin{equation}
	H^{D}u(x)=-\frac{{\mathrm d}^{2}u(x)}{{\mathrm d}x^2}+q_0(x)u(x),
\end{equation}
where $u$ satisfies the Dirichlet boundary condition, that is,
\begin{equation}\label{dirichletboundarycondition}
u(0)=u(1)=0.
\end{equation}
It is well known that $H^{D}$ has purely discrete spectrum, denoted by $\sigma(H^{D})=\{\lambda^{D}_{k}\}_{k\geq 1}$ (see, for instance, \cite{Floquettheory4BES}). It is worth mentioning that  the Dirichlet conditions correspond to the trivial case of a graph consisting of a union of uncoupled bonds~\cite{kottosSimil}.

To describe the second operator, let $q_p$ be the \textit{potential function} obtained by  extending periodically $q_0$ to the whole real axis $\mathbb{R}$. The \textit{Hill operator} $H^{\mathrm{per}}$ acts in  ${\mathrm L}^{2}(\mathbb{R})$ as
\begin{equation}
H^{\mathrm{per}}u(x)=-\frac{{\mathrm d}^{2}u(x)}{{\mathrm d}x^2}+q_p(x)u(x).
\end{equation}

For the spectral problem 
\begin{equation}\label{EVP for Hp}
H^{\mathrm{per}}\varphi=\lambda\varphi,
\end{equation}
 consider the \textit{monodromy matrix} $\mathbf M(\lambda)$ of $H^{\mathrm{per}}$ given by (see~\cite{Floquettheory4BES})
\begin{equation}\label{definition monodromy matrix}
\begin{bmatrix} \varphi(1) \\ \varphi^{\prime}(1) \end{bmatrix} = \mathbf M(\lambda) \begin{bmatrix} \varphi(0) \\ \varphi^{\prime}(0) \end{bmatrix},
\end{equation}
where $\varphi$ is any solution of the problem~(\ref{EVP for Hp}). The matrix $\mathbf M(\lambda)$ shifts $\begin{bmatrix}\varphi(0) & \varphi^{\prime}(0)\end{bmatrix}^{\intercal}$ by the period of $q_p$ (in our case,~$1$). Let 
\begin{equation}\label{eqDiscrimi}
\mathcal{D}(\lambda):= \mathrm{tr}(\mathbf M(\lambda))
\end{equation} be the \textit{discriminant} of the Hill operator $H^{\mathrm{per}}$. There are many results and properties about the spectrum of the Hill operator $H^{\mathrm{per}}$ and its discriminant~$\mathcal{D}(\lambda)$. For instance, the spectrum $\sigma(H^{\mathrm{per}})$ is purely absolutely continuous and
\begin{equation*}
\sigma(H^{\mathrm{per}})=\{\lambda\in\mathbb{R}:|\mathcal{D}(\lambda)|\leq 2\}. 
\end{equation*}
Furthermore, the spectrum $\sigma(H^{\mathrm{per}})$ is the union of closed  intervals $B_k$, called \textit{bands of} $\sigma(H^{\mathrm{per}})$, in such way that, for $\lambda\in B_k$, $\mathcal{D}^{\prime}(\lambda)\neq 0$ and $\mathcal{D}(\lambda):B_k \longrightarrow [-2,2]$ is a homeomorphism, for each $k$. In particular, in the free case, i.e., when $q_0=0$, $\mathcal{D}(\lambda)=2\cos\sqrt{\lambda}$ (see \cite{KP1}, Proposition 3.4, and also \cite{eastham,Floquettheory2K,Floquettheory3RS,Floquettheory4BES,HillMW}).

\subsubsection*{Bilayer graphene}

The structure that represents the AB-stacked bilayer graphene $\mathcal{G}_{2}$ consists of two graphene sheets, as described in Section~\ref{nAstackednlayergraphene}; see Figure \ref{FundamentalDomainBilayer}. Let us write out the conditions (\ref{continuity}), (\ref{vanishingflux}) and (\ref{floquetcondition}) on the fundamental domain $\mathcal{W}_{2}$ (Figure \ref{FundamentalDomainBilayer}). As we identify each edge~$\mathbf e$ with the interval $[0,1]$, it follows that $\mathbf v_{i1}\sim 0$ and $\mathbf v_{i2}\sim 1$, for $i=1,2$. By condition~(\ref{continuity}),  
\begin{equation}\label{new continuity 0}
\begin{cases}
u_{\mathbf{a}_{11}}(0)=u_{\mathbf{a}_{12}}(0)=u_{\mathbf{a}_{13}}(0)=u_{\mathbf{f}_{1}}(0)/t_0=:\alpha_1 \\
u_{\mathbf{a}_{21}}(0)=u_{\mathbf{a}_{22}}(0)=u_{\mathbf{a}_{23}}(0)=:\alpha_2 
\end{cases}.
\end{equation}
The  Floquet condition (\ref{floquetcondition}) implies that
\begin{equation}
u_{\mathbf{a}_{i1}}(1)=e^{i\theta_1}u_{\mathbf{a}_{i2}}(1) \quad\text{and}\quad u_{\mathbf{a}_{i1}}(1)=e^{i\theta_2}u_{\mathbf{a}_{i3}}(1), \quad i=1,2, 
\end{equation}
so by (\ref{continuity}),
\begin{equation}\label{new continuity 1}
\begin{cases}
u_{\mathbf{a}_{11}}(1)=e^{i\theta_1}u_{\mathbf{a}_{12}}(1)=e^{i\theta_2}u_{\mathbf{a}_{13}}(1)=:\beta_1 \\
u_{\mathbf{a}_{21}}(1)=e^{i\theta_1}u_{\mathbf{a}_{22}}(1)=e^{i\theta_2}u_{\mathbf{a}_{23}}(1)=u_{\mathbf{f}_1}(1)/t_0=:\beta_2 
\end{cases}.
\end{equation}
Similarly, by  condition~(\ref{vanishingflux}),  
\begin{equation}\label{newzeroflux}
\begin{cases}

u_{\mathbf{a}_{11}}^{\prime}(0)+u_{\mathbf{a}_{12}}^{\prime}(0)+u_{\mathbf{a}_{13}}^{\prime}(0)+t_{0}\,u_{\mathbf{f}_{1}}^{\prime}(0)=0 \\
u_{\mathbf{a}_{11}}^{\prime}(1)+e^{i\theta_1}u_{\mathbf{a}_{12}}^{\prime}(1)+e^{i\theta_2}u_{\mathbf{a}_{13}}^{\prime}(1)=0 \\
u_{\mathbf{a}_{21}}^{\prime}(0)+u_{\mathbf{a}_{22}}^{\prime}(0)+u_{\mathbf{a}_{23}}^{\prime}(0)=0 \\
u_{\mathbf{a}_{21}}^{\prime}(1)+e^{i\theta_1}u_{\mathbf{a}_{22}}^{\prime}(1)+e^{i\theta_2}u_{\mathbf{a}_{23}}^{\prime}(1)+t_{0}\,u_{\mathbf{f}_{1}}^{\prime}(1)=0 

\end{cases}.
\end{equation}

Let $\lambda \notin \sigma(H^{D})$. Then there exists two linearly independent solutions $\varphi_{\lambda,0},\varphi_{\lambda,1}$ of the problem 
\begin{equation}\label{dirichletEigenvalueProblem}
-\frac{{\mathrm d}^2\varphi(x)}{{\mathrm d}x^2}+q(x)\varphi(x)=\lambda \varphi(x),
\end{equation}
such that
\begin{equation}\label{conitions on vaphi}
\begin{cases}
\varphi_{\lambda,0}(0)=1\\
\varphi_{\lambda,0}(1)=0
\end{cases}
\quad\quad
\begin{cases}
\varphi_{\lambda,1}(0)=0\\
\varphi_{\lambda,1}(1)=1
\end{cases}
\end{equation}
and  
\begin{equation}\label{conditions on derivative}
\varphi_{\lambda,1}^{\prime}(x)=-\varphi_{\lambda,0}^{\prime}(1-x), \quad x\in[0,1].
\end{equation}

Since each edge of $\mathcal{W}_{2}$ is identified with the interval $[0,1]$, we can define $\varphi_{\lambda,i}$ in each edge and we will keep the same notation  $\varphi_{\lambda,i}$ for such functions. Hence, for each $\lambda\notin\sigma(H^{D})$, we can represent
\begin{equation}\label{representationoffunctionsABonW}
\begin{cases}
u_{\mathbf{a}_{i1}}=\alpha_{i}\varphi_{\lambda,0}+\beta_{i}\varphi_{\lambda,1}\\
u_{\mathbf{a}_{i2}}=\alpha_{i}\varphi_{\lambda,0}+e^{-i\theta_{1}}\beta_{i}\varphi_{\lambda,1}\\
u_{\mathbf{a}_{i3}}=\alpha_{i}\varphi_{\lambda,0}+e^{-i\theta_2}\beta_{i}\varphi_{\lambda,1}\\
u_{\mathbf{f}_{1}}=t_0\alpha_1\varphi_{\lambda,0}+t_0\beta_2\varphi_{\lambda,1}
\end{cases} \quad i=1,2.
\end{equation}
It is easy to see that the function defined by \eqref{representationoffunctionsABonW} satisfies the  conditions (\ref{new continuity 0}) and (\ref{new continuity 1}) and solves the eigenvalue problem \eqref{EVP1}. It remains to verify condition~(\ref{newzeroflux}). By substituting (\ref{representationoffunctionsABonW}) into (\ref{newzeroflux}), 
\begin{equation}\label{system part 1}
\begin{cases}
(3+t_0^2)\alpha_1\varphi_{\lambda,0}^{\prime}(0)+\bar{F}(\theta)\beta_1\varphi_{\lambda,1}^{\prime}(0)+t_0^2\, \beta_2\varphi_{\lambda,1}^{\prime}(0)=0\\
F(\theta)\alpha_1\varphi_{\lambda,0}^{\prime}(1)+3\beta_1\varphi_{\lambda,1}^{\prime}(1)=0\\
3\varphi_{\lambda,0}^{\prime}(0)\alpha_2+\bar{F}(\theta)\beta_2\varphi_{\lambda,1}^{\prime}(0)=0\\
t_0^2 \alpha_1 \varphi_{\lambda,0}^{\prime}(1)+F(\theta) \alpha_2 \varphi_{\lambda,0}^{\prime}(1)+(3+t_0^2)\varphi_{\lambda,1}^{\prime}(1)\beta_2=0
\end{cases},
\end{equation} 
where $F(\theta):=1+e^{i\theta_{1}}+e^{i\theta_2}$ and $\bar{F}(\theta)$ is its complex conjugate. By (\ref{conditions on derivative}), we have that $\varphi_{\lambda,0}^{\prime}(1)=-\varphi_{\lambda,1}^{\prime}(0)$ and $\varphi_{\lambda,0}^{\prime}(0)=-\varphi_{\lambda,1}^{\prime}(1)$. Thus the system (\ref{system part 1}) is equivalent to
\begin{equation}\label{system part 2}
\begin{cases}
-(3+t_0^2)\alpha_1\varphi_{\lambda,1}^{\prime}(1)+\bar{F}(\theta)\beta_1\varphi_{\lambda,1}^{\prime}(0)+t_0^2\, \beta_2\varphi_{\lambda,1}^{\prime}(0)=0\\
-F(\theta)\alpha_1\varphi_{\lambda,1}^{\prime}(0)+3\beta_1\varphi_{\lambda,1}^{\prime}(1)=0\\
-3\varphi_{\lambda,1}^{\prime}(1)\alpha_2+\bar{F}(\theta)\beta_2\varphi_{\lambda,1}^{\prime}(0)=0\\
-t_0^2 \alpha_1 \varphi_{\lambda,1}^{\prime}(0)-F(\theta) \alpha_2 \varphi_{\lambda,1}^{\prime}(0)+(3+t_0^2)\varphi_{\lambda,1}^{\prime}(1)\beta_2=0
\end{cases}.
\end{equation} 
Since $\varphi_{\lambda,1}^{\prime}(0)\neq 0$, the quotient 
\begin{eqnarray}\label{definition of eta}
\eta(\lambda):=\frac{\varphi_{\lambda,1}^{\prime}(1)}{\varphi_{\lambda,1}^{\prime}(0)}
\end{eqnarray}
is well defined. Hence, dividing the system (\ref{system part 2}) by $\varphi_{\lambda,1}^{\prime}(0)$ and multiplying the second and fourth lines by $-1$, we obtain  
\begin{equation}\label{system part 3}
\begin{cases}
-T_0\eta \alpha_1 + \bar{F}\beta_1+t_0^2\beta_2=0\\
F\alpha_1-3\eta\beta_1=0\\
-3\eta\alpha_2+\bar{F}\beta_2=0\\
t_0^2\alpha_1+F\alpha_2-T_0\beta_2=0
 \end{cases},
\end{equation}
where $T_0=3+t_0^2$, $\eta=\eta(\lambda)$ and $F=F(\theta)$. The matrix form of the system~(\ref{system part 3}) is
\begin{equation}
\mathbf{M}_{2}(\eta(\lambda),\theta)X = 0,
\end{equation}
where $X=\begin{bmatrix}\alpha_1 & \beta_1 & \alpha_2 & \beta_2\end{bmatrix}^{\intercal}$ and 

\begin{equation}\label{matrix for AB n=2}
\mathbf{M}_{2}(\eta(\lambda),\theta)=
\begin{bmatrix}
-T_0\eta&\bar{F}&0&t_0^2\\
F&-3\eta&0&0\\
0&0&-3\eta&\bar{F}\\
t_0^2&0&F&-T_0\eta
\end{bmatrix}.
\end{equation}

Note that $\det(\mathbf{M}_{2}(\lambda))$ is a quartic polynomial in $\eta(\lambda)$. Hence, if there exists a $\theta\in\mathcal{B}$ such that 
\begin{equation}\label{det=0}
\det(\mathbf{M}_{2}(\eta(\lambda),\theta))=0,
\end{equation}
that is, $\eta(\lambda)=r(\theta)$, where $r(\theta)$ is one of the four roots of \eqref{det=0}, it follows that the representation (\ref{representationoffunctionsABonW}) solves the eigenvalue problem (\ref{EVP1}) and so, by \eqref{spectrumHnA}, $\lambda\in\sigma(H_{2})$.

\begin{remark}\label{Remark_Roots_and_Etas}
	The four roots $r(\theta)$ of \eqref{det=0} are $\theta$-dependent and does not depend on $\lambda$. Thus, if there exists a $\theta\in\mathcal{B}$ such that \eqref{det=0} holds, then we get a relation between $\eta(\lambda)$ and the function $r(\theta)$. This fact gives explicitly the \textit{dispersion relation}  of $H_{2}$ (and similarly for  $H_3$).
\end{remark} 

\subsubsection*{Trilayer  graphene}
We now analyze the  case $n=3$;  $\mathcal{G}_{3}$ consists of three graphene sheets $G_1, G_2$ and $G_3$. Similarly to the case $n=2$, lets write out the conditions (\ref{continuity}), (\ref{vanishingflux}) and (\ref{floquetcondition}) on the fundamental domain $\mathcal{W}_{3}$ (see Figure \ref{FundamentalDomainTrilayer}). Since
\begin{equation*}
u_{\mathbf{a}_{i1}}(1)=e^{i\theta_1}u_{\mathbf{a}_{i2}}(1) \quad\text{and}\quad u_{\mathbf{a}_{i1}}(1)=e^{i\theta_2}u_{\mathbf{a}_{i3}}(1), \quad i=1,2,3, 
\end{equation*}
it follows that the continuity condition (\ref{continuity}) is equivalent to
\begin{equation}\label{continuity trilayer on W}
\begin{cases}
u_{\mathbf{a}_{11}}(0)=u_{\mathbf{a}_{12}}(0)=u_{\mathbf{a}_{13}}(0)=u_{\mathbf{f}_{1}}(0)/t_0=:\alpha_1\\
u_{\mathbf{a}_{11}}(1)=e^{i\theta_1}u_{\mathbf{a}_{12}}(1)=e^{i\theta_2}u_{\mathbf{a}_{13}}(1)=:\beta_1\\
u_{\mathbf{a}_{21}}(0)=u_{\mathbf{a}_{22}}(0)=u_{\mathbf{a}_{23}}(0)=:\alpha_2\\
u_{\mathbf{a}_{21}}(1)=e^{i\theta_1}u_{\mathbf{a}_{22}}(1)=e^{i\theta_2}u_{\mathbf{a}_{23}}(1)=u_{\mathbf{f}_{1}}(1)/t_0=u_{\mathbf{f}_{2}}(1)/t_0=:\beta_2\\
u_{\mathbf{a}_{31}}(0)=u_{\mathbf{a}_{32}}(0)=u_{\mathbf{a}_{33}}(0)=u_{\mathbf{f}_{2}}(0)/t_0=:\alpha_3\\
u_{\mathbf{a}_{31}}(1)=e^{i\theta_1}u_{\mathbf{a}_{32}}(1)=e^{i\theta_2}u_{\mathbf{a}_{33}}(1)=:\beta_3
\end{cases}
\end{equation}
and the condition (\ref{vanishingflux}) is equivalent to
\begin{equation}\label{zeroflux trilayer on W}
\begin{cases}
u_{\mathbf{a}_{11}}^\prime(0)+u_{\mathbf{a}_{12}}^\prime(0)+u_{\mathbf{a}_{13}}^\prime(0)+t_0\,u_{\mathbf{f}_{1}}^\prime(0)=0\\
u_{\mathbf{a}_{11}}^\prime(1)+e^{i\theta_1}u_{\mathbf{a}_{12}}^\prime(1)+e^{i\theta_2}u_{\mathbf{a}_{13}}^\prime(1)=0\\
u_{\mathbf{a}_{21}}^\prime(0)+u_{\mathbf{a}_{22}}^\prime(0)+u_{\mathbf{a}_{23}}^\prime(0)=0\\
u_{\mathbf{a}_{21}}^\prime(1)+e^{i\theta_1}u_{\mathbf{a}_{22}}^\prime(1)+e^{i\theta_2}u_{\mathbf{a}_{23}}^\prime(1)+t_0\,u_{\mathbf{f}_{1}}^\prime(1)+t_0\,u_{\mathbf{f}_{2}}^\prime(1)=0\\
u_{\mathbf{a}_{31}}^\prime(0)+u_{\mathbf{a}_{32}}^\prime(0)+u_{\mathbf{a}_{33}}^\prime(0)+t_0\,u_{\mathbf{f}_{2}}^\prime(0)=0\\
u_{\mathbf{a}_{31}}^\prime(1)+e^{i\theta_1}u_{\mathbf{a}_{32}}^\prime(1)+e^{i\theta_2}u_{\mathbf{a}_{33}}^\prime(1)=0
\end{cases}.
\end{equation}
Let $\lambda \notin \sigma(H^{D})$ and let $\varphi_{\lambda,0}$ and $\varphi_{\lambda,1}$ be the two linearly independent solutions of the problem (\ref{dirichletEigenvalueProblem}) that satisfies (\ref{conitions on vaphi}) and (\ref{conditions on derivative}). If we write
\begin{equation}\label{representation of functions on W n=3}
\begin{cases}
u_{\mathbf{a}_{i1}}=\alpha_{i}\varphi_{\lambda,0}+\beta_{i}\varphi_{\lambda,1}\\
u_{\mathbf{a}_{i2}}=\alpha_{i}\varphi_{\lambda,0}+e^{-i\theta_{1}}\beta_{i}\varphi_{\lambda,1}\\
u_{\mathbf{a}_{i3}}=\alpha_{i}\varphi_{\lambda,0}+e^{-i\theta_2}\beta_{i}\varphi_{\lambda,1}\\
u_{\mathbf{f}_{1}}=t_0\alpha_1\varphi_{\lambda,0}+t_0\beta_2\varphi_{\lambda,1}\\
u_{\mathbf{f}_{2}}=t_0\alpha_3\varphi_{\lambda,0}+t_0\beta_2\varphi_{\lambda,1}
\end{cases} \quad i=1,2,3,
\end{equation}
the continuity condition (\ref{continuity trilayer on W}), as well  problem (\ref{EVP1}), are satisfied. It remains to verify the condition~(\ref{zeroflux trilayer on W}). The matrix form of the obtained system  in this case is
\begin{equation*}
\mathbf{M}_{3}(\eta(\lambda),\theta)\mathbf X=0, 
\end{equation*}
with $\mathbf X=\begin{bmatrix}\alpha_1 & \beta_1 & \alpha_2 & \beta_2 & \alpha_3 & \beta_3\end{bmatrix}^{\intercal}$ and
\begin{equation}\label{matrix for AB n=3}
\mathbf{M}_{3}(\eta(\lambda),\theta)=
\begin{bmatrix}
\mathbf M_{T_0\tilde{T}_0} & \tilde{\mathbf m}_{t_0}^{\intercal}\\
\tilde{\mathbf m}_{t_0} & {\mathbf N}_{T_0} 
\end{bmatrix},
\end{equation}
where $T_0=3+t_0^2$, $\tilde{T}_0=3+2t_0^2$, $F=F(\theta)=1+e^{i\theta_1}+e^{i\theta_2}$, 
\begin{equation}\label{matrix M_T1T2}
\mathbf M_{T_0\tilde{T}_0}=
\begin{bmatrix}
-T_0\eta&\bar{F}&0&t_0^2\\
F&-3\eta&0&0\\
0&0&-3\eta&\bar{F}\\
t_0^2&0&F&-\tilde{T}_0\eta
\end{bmatrix}, 
\end{equation}
\begin{equation} \label{matrix N_T0}
\tilde{\mathbf m}_{t_0}=
\begin{bmatrix}
0 & 0 & 0 & t_0^2\\
0 & 0 & 0 & 0 
\end{bmatrix}
\quad \text{and} \quad
{\mathbf N}_{T_0}=
\begin{bmatrix}
-T_0\eta & \bar{F}\\
F & -3\eta
\end{bmatrix}.
\end{equation}

Note that $\mathbf{M}_{2}(\eta(\lambda),\theta)=\mathbf M_{T_0T_0}$. Thus, if there exists $\theta\in\mathcal{B}$ such that $\det(\mathbf{M}_{3}(\eta(\lambda),\theta))=0$, the representation (\ref{representation of functions on W n=3}) solves the eigenvalue problem~(\ref{EVP1}) and so $\lambda\in\sigma(H_{3})$.

\subsubsection*{Joint spectral analysis}
Therefore, we have the following result:

\begin{prop}\label{result det(M,n,AB)=0}
Let $\lambda\notin\sigma(H^{D})$. Then, for $n=2,3$ and $t_0>0$, the real number $\lambda \in \sigma(H_{n})$ if and only if there exists $\theta\in\mathcal{B}$ such that
\begin{equation}\label{eqdetMn}
\det\left(\mathbf{M}_{n}(\eta(\lambda),\theta)\right)=0\,.
\end{equation}
\end{prop}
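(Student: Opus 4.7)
The proposition is an ``if and only if'' statement, and the work done in the excerpt already gives the ``if'' direction almost for free; what remains is to carefully verify the ``only if'' direction. My plan is to split the proof accordingly.

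For sufficiency, suppose $\theta\in\mathcal{B}$ is such that $\det(\mathbf{M}_{n}(\eta(\lambda),\theta))=0$. Then there exists a nonzero vector $X=(\alpha_1,\beta_1,\ldots,\alpha_n,\beta_n)^\intercal$ in the kernel of $\mathbf{M}_n(\eta(\lambda),\theta)$. Using this $X$, I define $u$ on $\mathcal{W}_n$ by the representations \eqref{representationoffunctionsABonW} (for $n=2$) or \eqref{representation of functions on W n=3} (for $n=3$), and extend it to all of $\mathcal{G}_n$ by the Floquet condition \eqref{floquetcondition}. By the derivation in the excerpt, this $u$ automatically satisfies \eqref{EVP1} on every edge, the weighted continuity \eqref{continuity}, and the Floquet condition; the Kirchhoff condition \eqref{vanishingflux} reduces to $\mathbf{M}_n(\eta(\lambda),\theta)X=0$, which holds by assumption. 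Hence $u\in D(H_n(\theta))$ and $H_n(\theta)u=\lambda u$; to conclude $\lambda\in\sigma(H_n(\theta))$ I just need $u\not\equiv 0$, which follows from $X\ne 0$ together with the linear independence of $\varphi_{\lambda,0}$ and $\varphi_{\lambda,1}$. Then \eqref{spectrumHnA} gives $\lambda\in\sigma(H_n)$.

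For necessity, assume $\lambda\in\sigma(H_n)$ and $\lambda\notin\sigma(H^D)$. By \eqref{spectrumHnA} there exists $\theta\in\mathcal{B}$ with $\lambda\in\sigma(H_n(\theta))$; since $H_n(\theta)$ has purely discrete spectrum, $\lambda$ is an eigenvalue, so pick a nonzero eigenfunction $u\in D(H_n(\theta))$. On every edge $\mathbf{e}$ the restriction $u_{\mathbf e}$ solves \eqref{dirichletEigenvalueProblem}, and because $\lambda\notin\sigma(H^D)$, the pair $\{\varphi_{\lambda,0},\varphi_{\lambda,1}\}$ is a basis for the solution space on $[0,1]$, so $u_{\mathbf e}$ is determined by its values at the two endpoints. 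Applying the continuity condition \eqref{continuity} and the Floquet condition \eqref{floquetcondition} precisely collapses all the edge endpoint values into the parameters $(\alpha_i,\beta_i)$, forcing $u$ to have exactly the form \eqref{representationoffunctionsABonW} or \eqref{representation of functions on W n=3}. Setting $X=(\alpha_1,\beta_1,\ldots,\alpha_n,\beta_n)^\intercal$, the Kirchhoff condition \eqref{vanishingflux} translates (exactly as in the derivation leading to \eqref{system part 3}) into $\mathbf{M}_n(\eta(\lambda),\theta)X=0$.

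It remains to argue $X\ne 0$, because then the matrix is singular and the determinant vanishes. If $X=0$, then by the explicit representation $u_{\mathbf e}\equiv 0$ on every edge of $\mathcal{W}_n$, hence $u\equiv 0$ on $\mathcal{G}_n$ by the Floquet extension, contradicting the choice of $u$ as a nonzero eigenfunction. Thus $\det(\mathbf{M}_n(\eta(\lambda),\theta))=0$, completing the equivalence.

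The main obstacle is really only conceptual bookkeeping rather than a hard estimate: one must check that the hypothesis $\lambda\notin\sigma(H^D)$ is used crucially (and only) to guarantee that $\eta(\lambda)$ is well defined and that $\{\varphi_{\lambda,0},\varphi_{\lambda,1}\}$ spans the solution space, so the ansatz \eqref{representationoffunctionsABonW}–\eqref{representation of functions on W n=3} captures every eigenfunction. Once that is in hand, both directions reduce to algebra already performed in the excerpt.
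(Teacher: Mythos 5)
Your proof is correct and follows essentially the same route as the paper, which establishes the proposition by the derivation preceding it (the reduction of the vertex conditions to the linear system $\mathbf{M}_n(\eta(\lambda),\theta)X=0$ via the representation \eqref{representationoffunctionsABonW}, resp.\ \eqref{representation of functions on W n=3}). The only difference is that you spell out the necessity direction --- that every eigenfunction of $H_n(\theta)$ must have that representation because $\{\varphi_{\lambda,0},\varphi_{\lambda,1}\}$ spans the solution space when $\lambda\notin\sigma(H^D)$, and that $X\neq 0$ --- which the paper leaves implicit; this is a welcome clarification but not a different argument.
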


Note that $\det(\mathbf{M}_{n}(\eta(\lambda),\theta))$ is a polynomial of degree $2n$ in $\eta(\lambda)$, for $n=2,3$. By Proposition \ref{result det(M,n,AB)=0}, the spectra $\sigma(H_{n})$ are basically determined if we know the range of all $2n$ roots $r(\theta)$ of $\det(\mathbf{M}_{n}(\eta(\lambda),\theta))$ as a polynomial in $\eta(\lambda)$ (see Remark \ref{Remark_Roots_and_Etas})). For the case $n=2$ we can easily calculate the roots of~\eqref{eqdetMn}. We use the \textit{Laplace's expansion formula} for matrix determinants in $\mathbf{M}_{2}(\eta(\lambda),\theta)$, given by (\ref{matrix for AB n=2}), to obtain
\begin{eqnarray}
\label{QuarticEquation}\det(\mathbf{M}_{2}(\eta(\lambda),\theta))&=&9T_0^{2}\eta^{4}(\lambda)-(9t_0^4+6T_0F(\theta)\bar{F}(\theta))\eta^{2}(\lambda)\\ 
&+&\left(F(\theta)\bar{F}(\theta)\right)^2=0. \nonumber
\end{eqnarray}
Note that we can easily turn the quartic equation \eqref{QuarticEquation} into a quadratic one. 
Thus the four roots are
\begin{equation}\label{roots of delta polynomial of AB}
r_{\pm}^\pm(\theta)=\pm\sqrt{\frac{G_2(t_0,\theta)\pm\sqrt{G_2(t_0,\theta)^2-36T_0^2(F\bar{F})^2}}{18T_0^2}},
\end{equation}
where $G_2(t_0,\theta)=9t_0^4+6T_0F\bar{F}$. Here, the subscript $\pm$ refers to the outside $``\pm"$ of the first square root symbol while the superscript~$\pm$ refers to the inside one. 

For  $n=3$, applying Laplace's formula to~(\ref{matrix for AB n=3}), we obtain 
\begin{eqnarray}
\label{detM3AB=0 explicito}\det(\mathbf{M}_{3}(\eta(\lambda),\theta)) &=& 27T_0^2\tilde{T}_0\eta^6 -(54T_0t_0^4+9(T_0^2+2T_0\tilde{T}_0)F\bar{F})\eta^4\\
&+& (18t_0^4F\bar{F}+3(2T_0+\tilde{T}_0)(F\bar{F})^2)\eta^2 -(F\bar{F})^3.\nonumber
\end{eqnarray}
The six roots of $\det(\mathbf{M}_{3}(\eta(\lambda),\theta))$ are the following:
\begin{equation}\label{roots for n=3 part 1}
\tilde{r}_{\pm}^{\pm}(\theta)=\pm\sqrt{\frac{G_3(t_0,\theta)\pm\sqrt{G_3(t_0,\theta)^2-36T_0\tilde{T}_0(F\bar{F})^2}}{18T_0\tilde{T}_0}},
\end{equation}
\begin{equation}\label{roots for n=3 part 2}
\bar{r}_{\pm}(\theta)=\pm\sqrt{\frac{F\bar{F}}{3T_0}}, 
\end{equation}
where $G_3(t_0,\theta)= 18t_0^4 +3(T_0+\tilde{T}_0)F\bar{F}$.

\begin{remark}\label{remarkRoots3}
To calculate the exact roots \eqref{roots for n=3 part 1} and \eqref{roots for n=3 part 2} of~$\det(\mathbf{M}_{3}(\eta(\lambda),\theta))$, we have first replaced, in  $\mathbf{M}_{3}(\eta(\lambda),\theta)$, the matrices $\tilde{\mathbf m}_{t_0}$ and $\tilde{\mathbf m}_{t_0}^\intercal$ by zero, to obtain a block matrix $\tilde{\mathbf{M}}_{3}(\eta(\lambda),\theta)$, so that
\begin{equation*}
\det(\tilde{\mathbf{M}}_{3}(\eta(\lambda),\theta))=\det(\mathbf M_{T_0\tilde{T}_0})\det ({\mathbf N}_{T_0}). 
\end{equation*}
Hence, the six roots of $\det(\tilde{\mathbf{M}}_{3}(\eta(\lambda),\theta))$ are composed by the four roots of $\det(\mathbf M_{T_0\tilde{T}_0})$ and the two roots of $\det ({\mathbf N}_{T_0})$. Since the two determinants
\begin{equation}
\det (\mathbf{M}_{3}(\eta(\lambda),\theta))\quad \text{and} \quad  \det(\tilde{\mathbf{M}}_{3}(\eta(\lambda),\theta))
\end{equation}
have similar expressions, it was observed that the six roots of the original determinant should be similar to the roots of  $\det(\tilde{\mathbf{M}}_{3}(\eta(\lambda),\theta))$. Then, \eqref{roots for n=3 part 1} was obtained from inspection and suitable modifications on the roots of $\det (\mathbf M_{T_0\tilde{T}_0})$; \eqref{roots for n=3 part 2} are the roots of $\det ({\mathbf N}_{T_0})$ (the latter resembles the single layer case~\cite{KP1}).
\end{remark}

Now we just check that one is able to extend some results of~\cite{KP1} in order to characterize the spectrum of~$H_{n}$. For all that, first we discuss details of the relation between the function $\eta(\lambda)$ and the discriminant $\mathcal{D}(\lambda)$ of the Hill operator $H^{\mathrm{per}}$, for $n=2,3$. This is important to relate the spectrum of  $H^{\mathrm{per}}$ to the spectrum of~$H_{n}$. 

\begin{lem} (\cite{KP1}, page 813)\label{lemma relaton D() and eta()}
	Let $\lambda\notin\sigma(H^D)$ and  $\mathcal{D}(\lambda)$ be the discriminant of the Hill operator $H^{\mathrm{per}}$. Then for $H_2$ and $H_3$, we have
	\begin{equation}\label{relation D(lambda) and eta(lambda)}
	\eta(\lambda)=\frac{1}{2}\mathcal{D}(\lambda).
	\end{equation}
\end{lem}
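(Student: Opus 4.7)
The plan is to express $\eta(\lambda)$ through the standard fundamental basis of the Hill equation $-u''+q_0 u = \lambda u$ on $[0,1]$, and then to use the evenness of the potential $q_0$ about the midpoint to identify the result with $\mathcal{D}(\lambda)/2$. Fix $\lambda \notin \sigma(H^D)$ and let $c(\cdot),\,s(\cdot)$ denote the solutions of the equation determined by $c(0)=1$, $c'(0)=0$, $s(0)=0$, $s'(0)=1$. In this basis the monodromy matrix of $H^{\mathrm{per}}$ reads
\begin{equation*}
\mathbf{M}(\lambda)=\begin{bmatrix} c(1) & s(1) \\ c'(1) & s'(1)\end{bmatrix},\qquad \mathcal{D}(\lambda)=c(1)+s'(1),
\end{equation*}
and the assumption $\lambda\notin\sigma(H^D)$ is equivalent to $s(1)\ne 0$.

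Since $\varphi_{\lambda,1}(0)=0$ and $\varphi_{\lambda,1}(1)=1$, uniqueness of the initial value problem forces $\varphi_{\lambda,1}(x)=s(x)/s(1)$, hence $\varphi_{\lambda,1}'(0)=1/s(1)$ and $\varphi_{\lambda,1}'(1)=s'(1)/s(1)$, giving the first expression $\eta(\lambda)=s'(1)$. Independently, the condition $\varphi_{\lambda,0}(0)=1$, $\varphi_{\lambda,0}(1)=0$ yields $\varphi_{\lambda,0}(x)=c(x)-\frac{c(1)}{s(1)}s(x)$, and combining the identity (\ref{conditions on derivative}) with the Wronskian relation $c(1)s'(1)-c'(1)s(1)=1$ produces the alternative expression $\eta(\lambda)=\varphi_{\lambda,0}'(0)/\varphi_{\lambda,0}'(1)=c(1)$. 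Comparing the two formulas one obtains $c(1)=s'(1)$, and therefore
\begin{equation*}
\eta(\lambda)=\frac{c(1)+s'(1)}{2}=\frac{\mathcal{D}(\lambda)}{2}.
\end{equation*}

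The main obstacle, and the place where the hypotheses of the model enter, is justifying $c(1)=s'(1)$, which in the argument above was absorbed into (\ref{conditions on derivative}). If one prefers to derive it from scratch, set $\tilde c(x):=c(1-x)$; by $q_0(x)=q_0(1-x)$ this is again a solution, and expanding it in the basis $(c,s)$ via its initial data at $0$ gives $\tilde c(x)=c(1)\,c(x)-c'(1)\,s(x)$. Evaluating $\tilde c$ and $\tilde c'$ at $x=1$ yields the two relations $c(1)^2-c'(1)s(1)=1$ and $c'(1)\bigl(c(1)-s'(1)\bigr)=0$, and a case split on whether $c'(1)$ vanishes (using the Wronskian identity in the vanishing case) produces $c(1)=s'(1)$ in both situations. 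Once this identity is in hand, the lemma follows immediately, and the even-potential assumption on $q_0$ is seen to be exactly what is needed to make $\eta(\lambda)$ coincide with $\mathcal{D}(\lambda)/2$.
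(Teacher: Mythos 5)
Your proof is correct, and it follows essentially the same route as the argument the paper points to: the paper gives no proof of its own here, only the citation to \cite{KP1}, where the identity is obtained exactly as you do it, by writing $\varphi_{\lambda,1}=s/s(1)$ in the fundamental basis to get $\eta(\lambda)=s'(1)$ and using the evenness of $q_0$ (equivalently, the symmetry \eqref{conditions on derivative}) to conclude $c(1)=s'(1)$, hence $\eta=\tfrac12\bigl(c(1)+s'(1)\bigr)=\tfrac12\mathcal{D}(\lambda)$. Your closing derivation of $c(1)=s'(1)$ from $\tilde c(x)=c(1-x)$, including the case split on $c'(1)$, is a nice self-contained justification of the symmetry that the paper simply postulates in \eqref{conditions on derivative}.
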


Let $\lambda\in\sigma(H^D)$, i.e., an eigenvalue of the Dirichlet operator~$H^D$. For $n=2,3$,  the proof of Lemma~\ref{lemmaLambdaSigma(Hd)} is analogous to the proof  Lemma~3.5 of~\cite{KP1} and Lemma~6 of~\cite{DoKuch}.  

\begin{lem}\label{lemmaLambdaSigma(Hd)}
	Each $\lambda\in\sigma(H^D)$ is an eigenvalue of infinite multiplicity of~$H_n$. 
	\end{lem}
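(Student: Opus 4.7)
The plan is to construct, for each $\lambda\in\sigma(H^D)$ and for each hexagonal face of a layer of $\mathcal{G}_n$, a compactly supported eigenfunction of $H_n$ with eigenvalue $\lambda$, and then show these eigenfunctions are linearly independent. Since $q_0$ is even and continuous, Sturm--Liouville theory guarantees that any Dirichlet eigenfunction $\phi$ on $[0,1]$ with $H^D\phi=\lambda\phi$ and $\phi(0)=\phi(1)=0$ can be chosen with definite parity about $x=1/2$, so in particular $\phi'(0)=\pm\phi'(1)$. Fix any hexagonal face $F$ lying in one of the layers $G_j\subset\mathcal{G}_n$ and label its six edges $\mathbf{e}_1,\dots,\mathbf{e}_6$ in cyclic order. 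Define $u_F=\{u_{\mathbf{e}}\}_{\mathbf{e}\in E(\mathcal{G}_n)}$ by putting $u_{\mathbf{e}_i}:=\epsilon_i\,\phi$ on the edges of $F$, for a suitable sign pattern $\epsilon_i\in\{+1,-1\}$ alternating around $F$, and $u_{\mathbf{e}}\equiv 0$ on every other edge of $\mathcal{G}_n$ (including all interlayer $\mathbf{f}$-edges).

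Next I would check that $u_F\in D(H_n)$ and that $H_nu_F=\lambda u_F$. Regularity (condition (i)) and square summability (condition (ii)) are immediate since $\phi\in H^2([0,1])$ and $u_F$ has support in only six edges. The weighted continuity condition \eqref{continuity} holds trivially at every vertex because $\phi(0)=\phi(1)=0$ forces $u_{\mathbf{a}}(\mathbf{v})=0$ and $u_{\mathbf{f}}(\mathbf{v})/t_0=0$ everywhere. For the weighted Kirchhoff condition \eqref{vanishingflux}: at a vertex $\mathbf{v}\notin F$, every edge contributes zero. At a vertex $\mathbf{v}\in F$, only the two hexagon edges incident to $\mathbf{v}$ contribute to the layer sum, while the interlayer sum vanishes because $u_{\mathbf{f}}\equiv0$ on each $\mathbf{f}$. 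The outgoing derivative on $\mathbf{e}_i$ equals $\epsilon_i\phi'(0)$ or $-\epsilon_i\phi'(1)$ depending on whether $\mathbf{v}$ is identified with the endpoint $0$ or~$1$ of~$\mathbf{e}_i$; the definite parity of $\phi$ converts all outgoing derivatives into $\pm\phi'(0)$, and because a $6$-cycle is bipartite the alternating sign pattern $\epsilon_i$ can be chosen so that the two contributions cancel at every vertex of $F$. Finally $H_nu_F=\lambda u_F$ edgewise: on each $\mathbf{e}_i$ from $H^D\phi=\lambda\phi$, and on every other edge because both sides are zero.

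To conclude, the family $\{u_F\}_F$ indexed by the hexagonal faces of all the layers of $\mathcal{G}_n$ is linearly independent, since distinct hexagons differ on at least one edge where one of the functions is $\pm\phi\not\equiv 0$ and the other is identically zero; as $\mathcal{G}_n$ contains infinitely many hexagonal faces, the eigenspace of $H_n$ at $\lambda$ is infinite-dimensional.

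The main obstacle I foresee is the sign/orientation bookkeeping in the Kirchhoff step: one must verify that the alternating assignment of $\epsilon_i$ around $F$ really produces cancellation once the identifications of each $\mathbf{e}_i$ with $[0,1]$ (hence the direction of the outward derivative at each end) are taken into account, and that this works uniformly for both parities of $\phi$. This is precisely the bipartite-cycle argument used in \cite[Lemma~3.5]{KP1} and \cite[Lemma~6]{DoKuch}; the only new feature for $H_n$ is the presence of interlayer edges at type-A (resp.\ type-B) vertices, but these play no role since $u_F$ vanishes identically on every $\mathbf{f}$-edge, so they neither spoil the weighted continuity \eqref{continuity} nor contribute to the weighted Kirchhoff sum \eqref{vanishingflux}.
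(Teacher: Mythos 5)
Your construction of compactly supported ``hexagon states'' with alternating signs, extended by zero to all other edges (in particular to every interlayer $\mathbf f$-edge, so that the weighted conditions \eqref{continuity} and \eqref{vanishingflux} are unaffected), is exactly the argument the paper relies on: it simply defers to Lemma~3.5 of \cite{KP1} and Lemma~6 of \cite{DoKuch}, which you correctly identify and adapt, including the parity/bipartite bookkeeping. The only minor looseness is the final linear-independence step --- that any two hexagons differ on some edge gives only pairwise independence --- but this is repaired at once by restricting to an infinite family of pairwise edge-disjoint hexagons (or by the usual induction on an extremal hexagon of a finite subfamily).
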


We are ready to characterize the spectra. The following result is proved with the same arguments presented in~\cite{DoKuch}, Theorem~7, and~\cite{KP1}, Theorem~3.6. 

\begin{teo}\label{TheoChacSpectrum}	For the Bernal-stacked  graphene Schr\"odinger operators  $H_n$, $n=2,3$, we have: 
	\begin{itemize}
		\item[(i)] The singular continuous spectrum of the  $H_n$ is empty.
		\item[(ii)] The dispersion relation of $H_n$ consists of two parts:
		\begin{itemize}
			\item[$\bullet$] the pairs $(\lambda,\theta)$ ($\lambda\notin\sigma(H^D)$) such that 
			\begin{equation}\label{dispersionRelationFinalVersion}
			\mathcal{D}(\lambda)=2r(\theta),
			\end{equation}
			where $r(\theta)$ (which depends on~$t_0$ and~$n$) are the~$2n$ solutions of the equation $\det(\mathbf{M}_{n}((\eta(\lambda),\theta))=0$;
			\item[$\bullet$] the collection of flat branches $\lambda\in\sigma(H^D)$, that is, the pairs $(\lambda,\theta)$ for any $\theta\in\mathcal{B}$.
		\end{itemize}
		\item[(iii)] Its absolutely continuous spectrum $\sigma_{\mathrm{ac}}(H_n)$ coincides, as a set, with $\sigma(H^{\mathrm{per}})$, that is, it has a band-gap structure and
		\begin{equation}\label{abs cont spectrum of H^AB}
		\sigma_{\mathrm{ac}}(H_n)=\left\{\lambda\in\mathbb{R}:|\mathcal{D}(\lambda)|\leq 2\right\}.
		\end{equation}
		\item[(iv)] The pure point spectrum of $H_n$ coincides with $\sigma(H^D)$, and each  $\lambda\in\sigma(H^D)$ is an eigenvalue of infinite multiplicity of~$H_n$.
	\end{itemize}
\end{teo}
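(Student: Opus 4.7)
The plan is to combine the Floquet--Bloch decomposition with the explicit description of the fibers already obtained in Proposition~\ref{result det(M,n,AB)=0}, and then copy the band-analysis arguments used in~\cite{KP1,DoKuch}. Because $H_n$ is invariant under the shift action~\eqref{actionshift}, a standard direct-integral decomposition gives
\[
H_n \;\simeq\; \int_{\mathcal B}^{\oplus} H_n(\theta)\, \mathrm d\theta,
\]
so that $\sigma(H_n)=\bigcup_{\theta\in\mathcal B}\sigma(H_n(\theta))$ as in~\eqref{spectrumHnA}. The spectral type of $H_n$ is then read off from the analytic behaviour of the band functions $\theta\mapsto \lambda_k(\theta)$: a band which is nowhere constant contributes only purely absolutely continuous spectrum (and no singular continuous part), while a band which is identically constant on a set of positive measure in $\mathcal B$ contributes an eigenvalue of infinite multiplicity. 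This is the standard dichotomy from Floquet--Bloch theory, and once it is in place the four items of the theorem reduce to identifying which bands are flat and which are not.

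For $\lambda\notin\sigma(H^D)$, Proposition~\ref{result det(M,n,AB)=0} together with Lemma~\ref{lemma relaton D() and eta()} gives exactly the implicit dispersion relation
\[
\det\bigl(\mathbf M_n(\tfrac12\mathcal D(\lambda),\theta)\bigr)=0,
\]
which, by solving the polynomial in $\eta(\lambda)$, amounts to $\mathcal D(\lambda)=2r(\theta)$ for one of the $2n$ roots $r(\theta)$ listed in~\eqref{roots of delta polynomial of AB} for $n=2$ and \eqref{roots for n=3 part 1}--\eqref{roots for n=3 part 2} for $n=3$. This is precisely part~(ii) for the non-Dirichlet pairs. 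Since $\mathcal D:B_k\to[-2,2]$ is a homeomorphism on each Hill band $B_k$ and the roots $r(\theta)$ depend analytically (and non-trivially) on $\theta$, each such branch sweeps out a closed sub-interval of $B_k$; as $\theta$ ranges over $\mathcal B$, one checks that $r(\theta)$ in fact covers all of $[-1,1]$ (one already sees this for the ``single-layer type'' roots $\bar r_{\pm}(\theta)=\pm\sqrt{F\bar F/(3T_0)}$ in the trilayer case, and the argument for the other roots is a direct computation using that $F(\theta)\bar F(\theta)$ ranges over $[0,9]$). Pulling this back via $\mathcal D$, the union of the non-flat branches covers $\{\lambda:|\mathcal D(\lambda)|\le2\}=\sigma(H^{\mathrm{per}})$, giving the inclusion ``$\supseteq$'' in~\eqref{abs cont spectrum of H^AB}; the reverse inclusion is automatic from $|r(\theta)|\le 1$, which follows from the same range analysis.

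The Dirichlet part is handled by Lemma~\ref{lemmaLambdaSigma(Hd)}: every $\lambda\in\sigma(H^D)$ is an eigenvalue of infinite multiplicity of $H_n$, produced by compactly supported loop states obtained by gluing $\pm$ copies of a Dirichlet eigenfunction around a hexagon, exactly as in~\cite[Lemma~3.5]{KP1} and \cite[Lemma~6]{DoKuch}; these correspond to the flat branches in~(ii). To finish~(i), (iii) and~(iv) one has to verify that no \emph{other} flat branches can occur, i.e.\ that no non-Dirichlet band $\lambda_k(\theta)$ is constant on a set of positive measure. This is where the real work lies and is the main obstacle: it requires showing that each root $r(\theta)$ of $\det(\mathbf M_n(\eta,\theta))=0$ is a non-constant real-analytic function on $\mathcal B$, so that $\theta\mapsto\mathcal D^{-1}(2r(\theta))$ (within a fixed Hill band) is non-constant. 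This non-constancy is visible in the explicit formulas: the roots are built from $F(\theta)\bar F(\theta)=3+2(\cos\theta_1+\cos\theta_2+\cos(\theta_1-\theta_2))$, which is a non-constant analytic function of $\theta$, so each $r(\theta)$ and hence each non-flat band $\lambda_k(\theta)$ is non-constant. The standard Thomas/Wilcox argument then rules out singular continuous spectrum and embedded eigenvalues coming from the non-flat part, yielding~(i), the band-gap structure and equality~\eqref{abs cont spectrum of H^AB} in~(iii), and the identification $\sigma_{\mathrm{pp}}(H_n)=\sigma(H^D)$ with infinite multiplicity in~(iv). The remainder of the proof is bookkeeping analogous to that carried out in detail in~\cite[Theorem~7]{DoKuch} and~\cite[Theorem~3.6]{KP1}, and I would simply invoke those references after having verified the two model-specific ingredients above (the explicit roots and the non-constancy of $F\bar F$).
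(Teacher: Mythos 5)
Your proposal follows essentially the same route as the paper, which likewise reduces the theorem to Proposition~\ref{result det(M,n,AB)=0}, Lemma~\ref{lemma relaton D() and eta()} and Lemma~\ref{lemmaLambdaSigma(Hd)} and then simply invokes the arguments of \cite[Theorem~3.6]{KP1} and \cite[Theorem~7]{DoKuch}; you are just spelling out more of the standard Floquet--Bloch bookkeeping (direct integral, non-constancy of the analytic band functions, absence of singular continuous spectrum). One small inaccuracy worth fixing: in the trilayer case the ``single-layer type'' roots $\bar r_{\pm}(\theta)=\pm\sqrt{F\bar F/(3T_0)}$ alone only sweep $\bigl[-\sqrt{3/T_0},\sqrt{3/T_0}\bigr]$, not all of $[-1,1]$ when $t_0>0$, so the covering of $\sigma(H^{\mathrm{per}})$ genuinely requires the union over all $2n$ roots, as your subsequent remark already acknowledges.
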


For the bilayer graphene operator $H_2$, the four curves  of the dispersion relation (\ref{dispersionRelationFinalVersion}) reduces to
\begin{eqnarray*}
\mathcal{D}(\lambda)=\pm2\sqrt{\frac{G_2(t_0,\theta)\pm\sqrt{G_2(t_0,\theta)^2-36T_0^2(F\bar{F})^2}}{18T_0^2}},
\end{eqnarray*}
where $G_2(t_0,\theta)=9t_0^4+6T_0F\bar{F}$, and to obtain the six curves of the dispersion relation of the trilayer graphene operator $H_{3}$, (\ref{dispersionRelationFinalVersion}) takes the form (recall~\eqref{roots for n=3 part 1} and~\eqref{roots for n=3 part 2})
\begin{eqnarray*}
\mathcal{D}(\lambda)&=&\pm2\sqrt{\frac{G_3(t_0,\theta)\pm\sqrt{G_3(t_0,\theta)^2-36T_0\tilde{T}_0(F\bar{F})^2}}{18T_0\tilde{T}_0}} 
\end{eqnarray*}
and
\begin{eqnarray*}
\mathcal{D}(\lambda)&=& \pm2\sqrt{\frac{F\bar{F}}{3T_0}},
\end{eqnarray*}
where $G_3(t_0,\theta)= 18t_0^4 +3(T_0+\tilde{T}_0)F\bar{F}$.

\section{Dirac cones}\label{sectionDiracCones}

 As mentioned in the Introduction, roughly, a \textit{Dirac cone} is a point where two spectral bands linearly touch each other, at least in lowest order approximation, and the quasimomentum $\theta_D\in\mathcal{B}$ for which a Dirac cone occurs is called a \textit{D-point}. It is convenient to present a definition. We say that $\theta_D\in \mathcal B$ is a D-point, if there is a constant $\gamma\ne0$ so that  
\begin{equation}\label{eqDefDiracCone}
\lambda(\theta)-\lambda(\theta_D)+\mathcal{O}((\lambda(\theta)-\lambda(\theta_D))^2)=\pm \gamma|\theta-\theta_D|+\mathcal{O}(|\theta-\theta_D|^2),
\end{equation}
with the~``-''  and~``+'' signs for the valence and conducting bands, respectively.

\begin{teo} \label{mainresult}
For  the Bernal-stacked bilayer and trilayer graphene Schr\"odinger operators $H_2$ and $H_3$, respectively,   and each $t_0>0$, we have:
	\begin{enumerate}
		\item [(i)] The dispersion relation of  $H_2$ presents  parabolic touches at the points $\pm(2\pi/3,-2\pi/3)$ in the Brillouin zone, but no Dirac cone (see Figures \ref{DispersionRelationBilayer} and~\ref{DispersionRelationBilayer3D}).
			\item[(ii)] The dispersion relation of $H_3$ presents  (see Figures~\ref{DispersionRelationTrilayer} and~\ref{DispersionRelationTrilayer3D}):
		\begin{itemize}
			\item two Dirac cones  at the D-points $\pm(2\pi/3,-2\pi/3)$, and
			\item two parabolic touches at the same points $\pm(2\pi/3,-2\pi/3)$. 
		\end{itemize}

	\end{enumerate}
\end{teo}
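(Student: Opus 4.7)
The plan is to localize the band contacts, expand the explicit dispersion formulas from Theorem~\ref{TheoChacSpectrum} around them, and transfer the local behaviour of $\mathcal{D}(\lambda)$ to that of $\lambda(\theta)$ by inverting $\mathcal{D}$ on the appropriate Hill band, where $\mathcal{D}'\ne 0$. The first step pins down the candidate D-points: each branch of $\mathcal{D}(\lambda)$ written out after Theorem~\ref{TheoChacSpectrum} vanishes only when $|F(\theta)|^{2}=0$, so any contact at $\mathcal{D}=0$ must occur at a zero of $F(\theta)=1+e^{i\theta_1}+e^{i\theta_2}$ in $\mathcal{B}$; one checks directly that the only such zeros are $\theta_K^{\pm}=\pm(2\pi/3,-2\pi/3)$. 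Setting $\xi=\theta-\theta_K^{\pm}$ and Taylor-expanding gives
\[
|F(\theta)|^{2}=Q(\xi)+\mathcal{O}(|\xi|^{3}),\qquad Q(\xi):=\xi_1^{2}-\xi_1\xi_2+\xi_2^{2},
\]
a positive definite quadratic form.

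For part (i) I would expand the four bilayer branches near $\theta_K^{\pm}$. Using $G_2(\theta_K^{\pm})=9t_0^{4}>0$ and the identity
\[
G_2-\sqrt{G_2^{2}-36T_0^{2}(F\bar F)^{2}}=\frac{36T_0^{2}(F\bar F)^{2}}{G_2+\sqrt{G_2^{2}-36T_0^{2}(F\bar F)^{2}}},
\]
the inner ``$-$'' branches give $\mathcal{D}(\lambda)=\pm 2Q(\xi)/(3t_0^{2})+\mathcal{O}(|\xi|^{3})$, a quadratic contact at $\mathcal{D}=0$, while the inner ``$+$'' branches satisfy $|\mathcal{D}(\lambda)|\to 2t_0^{2}/T_0\ne 0$ at $\theta_K^{\pm}$ and therefore take no part in any band contact there. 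Local inversion of $\mathcal{D}$ on the interior of the Hill band containing $\lambda_{*}:=\mathcal{D}^{-1}(0)$ then yields $\lambda(\theta)-\lambda_{*}=\pm\gamma\,Q(\xi)+\mathcal{O}(|\xi|^{3})$ with $\gamma\ne 0$, a parabolic touch whose purely quadratic leading term is incompatible with the linear behaviour \eqref{eqDefDiracCone} required of a Dirac cone.

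For part (ii), the first trilayer family of branches has the same algebraic structure as the bilayer one with $T_0^{2}$ replaced by $T_0\tilde T_0$, so the identical argument delivers two parabolic touches (the inner ``$-$'' pair) and two non-contacting branches at $\pm 2\sqrt{2}\,t_0^{2}/\sqrt{T_0\tilde T_0}$ (the inner ``$+$'' pair). The decisive new feature is the third family $\mathcal{D}(\lambda)=\pm 2\sqrt{F\bar F/(3T_0)}$, for which
\[
\mathcal{D}(\lambda)=\pm\frac{2}{\sqrt{3T_0}}\sqrt{Q(\xi)}+\mathcal{O}(|\xi|^{2})
\]
near $\theta_K^{\pm}$, a leading behaviour linear in $|\xi|$ (up to a direction-dependent factor $\sqrt{Q(\xi/|\xi|)}$ that stays bounded away from $0$ and $\infty$ by positive definiteness of $Q$). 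Local inversion of $\mathcal{D}$ on the Hill band containing $\lambda_{**}:=\mathcal{D}^{-1}(0)$ then produces the conical dispersion of \eqref{eqDefDiracCone}, giving one Dirac cone at each $\theta_K^{\pm}$.

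The hardest part will be the bookkeeping through the nested square roots, which is made tractable by the algebraic identities above that isolate the leading order in $|F(\theta)|^{2}$, together with verifying that $\lambda_{*}$ and $\lambda_{**}$ fall in the interior of bands of $\sigma(H^{\mathrm{per}})$, so that $\mathcal{D}'\ne 0$ and the local inversion really does preserve the order of vanishing in $\xi$; this last point reduces to checking that the values of $|\mathcal{D}|$ at the contacts are strictly below $2$, which can be read off from the explicit formulas above.
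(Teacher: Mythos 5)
Your proposal is correct and follows essentially the same route as the paper: both locate the contacts at the zeros $\pm(2\pi/3,-2\pi/3)$ of $F$, extract the leading order of the explicit roots of $\det\mathbf{M}_{n}$ in powers of $F\bar F$ (finding $\mathcal{D}\sim\pm 2F\bar F/(3t_0^2)$ for the touching bilayer pair, nonzero limits $\pm2t_0^2/T_0$ for the non-contacting pair, and $\mathcal{D}=\pm2\sqrt{F\bar F/(3T_0)}$ for the conical trilayer pair), and then transfer from $\mathcal{D}(\lambda)$ to $\lambda(\theta)$ using $\mathcal{D}^\prime\neq0$ on the Hill bands. The only differences are presentational: the paper restricts to the diagonal $\mathcal{B}_d$ (legitimate because every root depends on $\theta$ only through $F\bar F$) and resolves the $0/0$ indeterminacy with the binomial expansion of $(1+x)^{1/2}$, whereas you expand in a full two-dimensional neighbourhood via the positive-definite form $Q(\xi)=\xi_1^2-\xi_1\xi_2+\xi_2^2$ and use the rationalization identity, which yields the same conclusions and is, if anything, slightly more faithful to the planar definition \eqref{eqDefDiracCone}.
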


\begin{proof}
	(i) By Theorem \ref{TheoChacSpectrum}(ii), the non-constant part of the dispersion relation for the AB-stacked bilayer graphene is given by $\mathcal{D}(\lambda)=2r_{\pm}^{\pm}(\theta)$, where
	\begin{equation}\label{RootsForBilayer}
	r_{\pm}^{\pm}(\theta)=\pm\sqrt{\frac{G_2\pm\sqrt{G_2^2-36T_0^2(F\bar{F})^2}}{18T_0^2}},
	\end{equation}
	with $G_2=G_2(t_0,\theta)=9t_0^4+6T_0F\bar{F}$ and $F=F(\theta)=1+e^{i\theta_1}+e^{i\theta_2}$.
	It is easy to show that
	
	\begin{lem} \label{lema dos max e min} The four roots $r_\pm^\pm(\theta)$ of $\det(\mathbf{M}_2(\eta(\lambda),\theta))$ satisfies: 
		\begin{itemize}
			\item [(i)] $\max r_{+}^{+}(\theta)=1$ and $\min r_{+}^{+}(\theta)=t_0^2/(3+t_0^2)$, which are attained at $(0,0)$ and $\pm(2\pi/3,-2\pi/3)$, respectively. 
			\item [(ii)] $\max r_{+}^{-}(\theta)=3/(3+t_0^2)$ and $\min r_{+}^{-}(\theta)=0$, which are attained at $(0,0)$ and $\pm(2\pi/3,-2\pi/3)$, respectively.
			\item [(iii)] $\max r_{-}^{+}(\theta)=-t_0^2/(3+t_0^2)$ and $\min r_{-}^{+}(\theta)=-1$, which are attained at $\pm(2\pi/3,-2\pi/3)$ and $(0,0)$, respectively.  
			\item [(iv)] $\max r_{-}^{-}(\theta)=0$ and $\min r_{-}^{-}(\theta)=-3/(3+t_0^2)$, which are attained at $\pm(2\pi/3,-2\pi/3)$ and $(0,0)$, respectively. 								
		\end{itemize}
	\end{lem}

	\begin{figure}[h]
		
		\centering 
		\includegraphics[width=8cm]{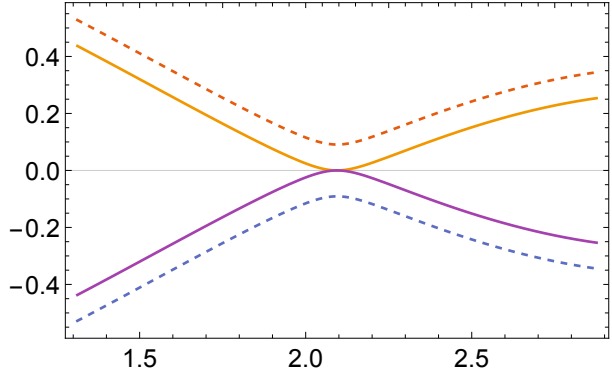}
		\caption{The dispersion relation of the AB-stacked bilayer graphene operator restricted to $\mathcal{B}_d$ and $\theta_1\in\left[\frac{2\pi}{3}-\frac{\pi}{4},\frac{2\pi}{3}+\frac{\pi}{4}\right]$, with the parameter $t_0=0.55$ (and $\theta_2=-\theta_1$). There are no Dirac cones, only quadratic touches.}
		\label{DispersionRelationBilayer}

	\end{figure}
	
	\begin{figure}[h]
		
		\centering 
		\includegraphics[width=8cm]{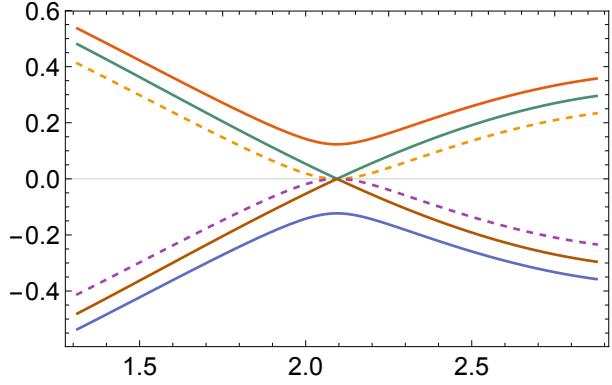}
		\caption{The dispersion relation of the AB-stacked trilayer graphene operator restricted to $\mathcal{B}_d$ and $\theta_1\in\left[\frac{2\pi}{3}-\frac{\pi}{4},\frac{2\pi}{3}+\frac{\pi}{4}\right]$, with the parameter $t_0=0.55$ (and $\theta_2=-\theta_1$). There is a Dirac cone at $2\pi/3$. Note that there is also a quadratic touch at~$2\pi/3$. This qualitatively reproduces, for instance, the experimental results reported in Figure~1(e) of~\cite{BYWCAAZABA2}. }
		\label{DispersionRelationTrilayer}

	\end{figure}

		\begin{figure}[h]
		
		\centering 
		\includegraphics[width=10cm]{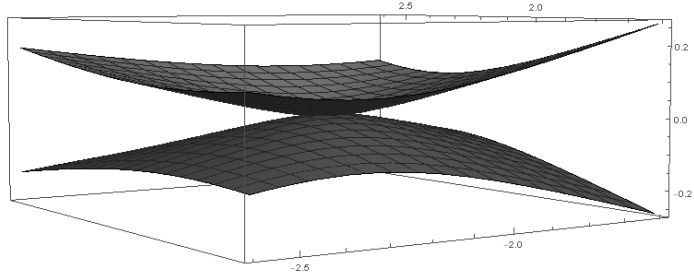}
		\caption{The quadratic touch of the dispersion relation of the AB-stacked bilayer graphene operator restricted to a neighbourhood of the point $(2\pi/3,-2\pi/3)$. The parameter $t_0=0.55$ was considered. The other two surfaces were omitted (see Figure \ref{DispersionRelationBilayer}).}
		\label{DispersionRelationBilayer3D}

	\end{figure}

		\begin{figure}[h]
		
		\centering 
		\includegraphics[width=10cm]{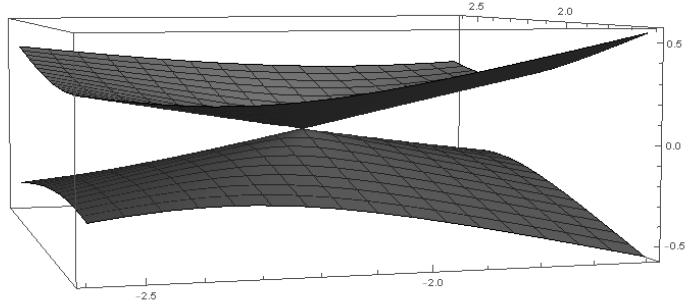}
		\caption{The Dirac cone of the dispersion relation of the AB-stacked trilayer graphene operator restricted to a neighbourhood of the D-point $(2\pi/3,-2\pi/3)$. The parameter $t_0=0.55$ was considered. The other four surfaces where omitted, including the quadratic touch similar to the one in Figure~\ref{DispersionRelationBilayer3D}.}
		\label{DispersionRelationTrilayer3D}
	\end{figure}
	
	Note that the possible Dirac cones in this case depend on the function 
	\begin{equation*}
	F(\theta)\bar{F}(\theta)=|F(\theta)|^2=1+8\cos\left(\frac{\theta_1-\theta_2}{2}\right)\cos\left(\frac{\theta_1}{2}\right)\cos\left(\frac{\theta_2}{2}\right).
	\end{equation*}
	We have that $|F(\theta)|=0$ if and only if $\theta=\theta_\pm=\pm(2\pi/3,-2\pi/3)$, and since these points belong to the diagonal 
	\[
	\mathcal{B}_d:=\{\theta\in\mathcal{B}:\theta_1=-\theta_2\}
	\] of the Brillouin zone $\mathcal{B}$, it suffices to study $r_{\pm}^{\pm}(\theta)$ restricted to $\mathcal{B}_d$. Note the touch at $\theta_\pm=\pm(2\pi/3,-2\pi/3)$ given by Lemma~\ref{lema dos max e min}, items~(ii) and~(iv), where $\min r_{+}^{-}(\theta_\pm)=\max r_{-}^{-}(\theta_\pm)=0$ (that is, $\mathcal D(\lambda)$ vanishes).
	
	For $\theta\in\mathcal{B}_d$, $F(\theta)=1+2\cos(\theta_1)$ and then \eqref{RootsForBilayer} takes the form
	\begin{equation}\label{EtaRestrictDiag}
	r_{\pm}^{\pm}(\theta)=\pm\sqrt{\frac{G_2\pm\sqrt{G_2^2-36T_0^2F^4}}{18T_0^2}}.
	\end{equation} 
	Calculating $dr_{\pm}^{\pm}/d\theta$, we get	
	\begin{equation}\label{DerivativeEtaRestrict}
	\frac{dr_{\pm}^{\pm}(\theta)}{d\theta}=\pm\frac{G_2^\prime\pm\frac{2G_2G_2^\prime-144T_0^2F^3F^\prime}{2(G_2^2-36T_0^2F^4)^{1/2}}}{2(18T_0^2)^{1/2}(G_2\pm(G_2^2-36T_0^2F^4)^{1/2})^{1/2}},
	\end{equation}
	where $G_2^\prime=12T_0FF^\prime$. Thus, $dr_{\pm}^{+}/d\theta$ do exist for every $\theta\in\mathcal{B}_d$ and vanish at $\pm 2\pi/3$, since $F(\pm2\pi/3)=0$;  so the branches $r_{\pm}^{+}$ have a quadratic behaviour close to~$\pm 2\pi/3$ and do not present Dirac cones. However, we have an indetermination (i.e., $0/0$) for $dr_{\pm}^{-}/d\theta$ in $\pm2\pi/3$ (otherwise they are well behaved); the  behaviour of these functions around  $\pm2\pi/3$ can be obtained by recalling that, for small~$x$, we have
	\begin{equation*}
	(1+x)^{1/2}\approx 1+\frac{x}{2}-\frac{x^2}{8}.	
	\end{equation*} 
	Then, for $\theta$ close to $\pm 2\pi/3$, 
	\begin{eqnarray*}
	r_{\pm}^{-}(\theta)&=&\pm\sqrt{\frac{9t_0^4+6T_0F^2-\sqrt{(9t_0^4+6T_0F^2)^2-36T_0^2F^4}}{18T_0^2}}\\
	&=& \pm \sqrt{\frac{9t_0^4+6T_0F^2-9t_0^4(1+\frac{4T_0}{3t_0^4}F^2)^{1/2}}{18T_0^2}}\\
	&\approx& \pm\sqrt{\frac{9t_0^4+6T_0F^2-9t_0^4(1+\frac{2T_0}{3t_0^4}F^2-\frac{2T_0^2}{9t_0^8}F^4)}{18T_0^2}}\\
	&=&\pm\frac{F^2}{3t_0^2}. 
	\end{eqnarray*}
Thus, $r_\pm^{-}(\theta)$ also have quadratic behaviour near~$\pm2\pi/3$. Therefore, the dispersion relation of the AB-stacked bilayer graphene does not present Dirac cones and have two parabolic touches at $\pm(2\pi/3,-2\pi/3)$.

(ii) By Theorem \ref{TheoChacSpectrum}(ii), the non-constant part of the dispersion relation for the AB-stacked trilayer graphene is given by $\mathcal{D}(\lambda)=2r(\theta)$,
where $r(\theta)$ are the six roots of $\det (\mathbf M_{3}(\eta(\lambda),\theta))$, which are given by~\eqref{roots for n=3 part 1} and~\eqref{roots for n=3 part 2}. Analogously to the bilayer graphene case, it is sufficient to consider $\theta\in\mathcal{B}_d$. Thus, \eqref{roots for n=3 part 1} turns to 
\begin{equation}\label{NewRootsPart1}
\tilde{r}_{\pm}^{\pm}(\theta)=\pm\sqrt{\frac{G_3\pm\sqrt{G_3^2-36T_0\tilde{T}_0F^4}}{18T_0\tilde{T}_0}},
\end{equation}
where $G_3=G_3(t_0,\theta)=9t_0^4+3(T_0+\tilde{T}_0)F^2$ and $F=F(\theta)=1+2\cos(\theta_1)$, and \eqref{roots for n=3 part 2} takes the form
\begin{equation}\label{NewRootsPart2}
\bar{r}_{\pm}(\theta)=\pm\frac{|F(\theta)|}{\sqrt{3T_0}}.
\end{equation}
The proofs that \eqref{NewRootsPart1} does not have Dirac points and have two parabolic touches at $\pm(2\pi/3,-2\pi/3)$ are analogous to the proof presented for item~(i). It remains to show that $\mathcal{D}(\lambda)=2\bar{r}_{\pm}(\theta)$ satisfies \eqref{eqDefDiracCone}.

Expanding \eqref{NewRootsPart2} in Taylor's series around $\theta_D:=\pm2\pi/3$, we get
\begin{equation}\label{NewRootsPart2Expanded}
\bar{r}_{\pm}(\theta)-\bar{r}_{\pm}(\theta_D)=\pm\bar{\gamma}|\theta_1-\theta_D|+\mathcal{O}(|\theta_1-\theta_D|^2),
\end{equation}
where $\bar{\gamma}=\sqrt{\frac{1}{T_0}}$, since $\bar{r}_{\pm}(\theta_D)=0$ and 
\begin{equation*}
\cos\theta_1=-\frac{1}{2}\mp\frac{\sqrt{3}}{2}(\theta_1-\theta_D)+\mathcal{O}((\theta_1-\theta_D)^2).
\end{equation*}
We have $\bar\gamma\ne0$, since $t_0>0$. It remains to analyze $\mathcal{D}(\lambda)=\mathcal{D}(\lambda(\theta))$. Since $\mathcal{D}^\prime(\lambda(\theta))\neq0$ in the spectral bands of $\sigma(H_3)$ (see \cite{KP1}, Proposition 3.4), then we can expand $\mathcal{D}(\lambda(\theta))$ in Taylor's series around $\lambda(\theta_D)$, to obtain\\
\begin{equation}\label{ExpandingD}
\mathcal{D}(\lambda(\theta))-\mathcal{D}(\lambda(\theta_D))=\mathcal{D}^\prime(\lambda(\theta_D))(\lambda(\theta)-\lambda(\theta_D))+\mathcal{O}((\lambda(\theta)-\lambda(\theta_D))^2).
\end{equation}
In particular, when $q_0=0$, that is, in the free case, since $\mathcal{D}(\lambda(\theta))=2\cos\sqrt{\lambda(\theta)}$, it follows that
\begin{equation*}
\mathcal{D}^\prime(\lambda(\theta))=-\lambda^\prime(\theta_D)\frac{\sin(\sqrt{\lambda(\theta_D)})}{\sqrt{\lambda(\theta)}}. 
\end{equation*}

Combining \eqref{NewRootsPart2Expanded} and \eqref{ExpandingD}, we obtain \eqref{eqDefDiracCone}, with $\gamma=2\bar{\gamma}/\mathcal{D}^\prime(\lambda(\theta_D))$. Therefore, the dispersion relation of the AB-stacked trilayer graphene have two Dirac cones, precisely at $\pm(2\pi/3,-2\pi/3)$ in the Brillouin zone~$\mathcal{B}$, and the proof of the theorem is complete.  
\end{proof}

\section{Comparison with physics literature}\label{sectComparePL}

Our study of the spectrum and the dispersion relation of the AB-stacked bi- and trilayer graphene is based on a limit model, using periodic quantum graph structures. The advantage of this method is that the dispersion relation has an explicit analytical expression. A shortcoming of this approach is the restriction that all edges must have the same length, and so it is not possible to extend it to include second nearest neighbours, for instance (as it is common in more sophisticated tight-binding models). 

The obtained results for the proposed models are consistent with the physics literature. In summary, AB-stacked bi- and trilayer models  have a gapless band component and thus may be characterized as a semimetal, with   parabolic band touch for the bilayer and  the presence of Dirac cones for the trilayer.  Before we elaborate a little about such comparisons, it is worth mentioning that this opens the possibility of  chiral particles with a parabolic nonrelativistic energy spectrum~\cite{novoselovEtal2006,mccannKosh}; a subject to be mathematically investigated. 

Tight-binding models are standard in the physics literature; the single layer graphene case may be recalled in~\cite{castroEtAl} and for the bilayer see~\cite{RSRFAB1,mccannKosh}. Such models of graphene include second and third nearest neighbours (the first graphene tight-binding model~\cite{wallace} has indeed considered second nearest neighbours). The rigorously obtained gapless parabolic band touches at two points for the AB bilayer, illustrated in Figure~\ref{DispersionRelationBilayer}, are similarly predicted by tight-binding calculations; see Figure~8 in~\cite{RSRFAB1} and Figure~3 in~\cite{mccannKosh}. These features  are important for the description of physical properties of AB bilayer graphene, as optical and transport properties~\cite{mccannKosh}.

Such parabolic touches  in the dispersion relations for AB bilayer graphene are also observed in calculations by density functional theory \cite{mSBMac,latilH}. The same general behaviour was obtained by an effective two-dimensional Hamiltonian~\cite{McCannFalko2006} that acts in a space of two-component wave functions (\`a la Dirac equation).  


 For the trilayer  graphene with AB-stacking, the presence of Dirac cones was observed in \cite{LJXABA1,MMABAandABAB} through tight-binding calculations; see, in particular, Figure~2 in~\cite{LJXABA1} and Figure~4 in~\cite{MMABAandABAB}. This is compatible with Theorem~\ref{mainresult}(ii) for our graph model and illustrated in Figure~\ref{DispersionRelationTrilayer}, which also reproduces the experimental data reported in Figure~1(e) of~\cite{BYWCAAZABA2}.  
 
 Another point of agreement with the physics literature is that, for the trilayer case, the Dirac cones come from  one pair of curves in the dispersion relation, and they are absent in the other two pairs~\cite{MMABAandABAB}; it can be seen as a combination~\cite{castroEtAl} of features of a single graphene sheet with the bilayer one, exactly what our results reveal explicitly (see, e.g., Remark~\ref{remarkRoots3}).
 
For the trilayer graphene, we have found that the existence and location of D-points are independent of the value of the interlayer interaction parameter~$t_0$. It would be interesting to investigate whether this occurs for, say, tight-binding models, at least for a range of interaction parameters; we have not found any result in this direction in the literature. 

It is not expected that we could push the physical comparisons of our results far beyond the band touches, since spectral values of energy, for example, depend on the choice of the edge potential~$q_0$, and we have no clue on the (possible) best choices. Together with the necessity of all edges having the same length, an apparent artifact of the graph approximation is the presence of eigenvalues of infinite multiplicity (from Dirichlet boundary conditions), which has not been observed in other approaches to such problems.

To finish, we (again) underline that although our proposed model could be considered rather simple and have some freedom in the choice of the edge potential, it is possible to carry out explicitly and rigorous calculations that recover important findings, by other methods, from the physics literature.

\subsubsection*{Acknowledgments} 
CRdO thanks Professor A. J. Chiquito for discussions on aspects of graphene, and the partial support by CNPq (a Brazilian government agency, under contract 303503/2018-1). VLR thanks the financial support by CAPES (a Brazilian government agency).

\end{document}